     \def\section{\@startsection{section}{1}%
     \z@{.7\linespacing\@plus\linespacing}{.5\linespacing}%
     {\bfseries
     \centering
     }}
     \def\@secnumfont{\bfseries}
\newtheorem{theorem}{Theorem}[section]
\newtheorem{lemma}[theorem]{Lemma}
\newtheorem{corollary}[theorem]{Corollary}
\theoremstyle{definition}
\newtheorem{example}[theorem]{Example}
\theoremstyle{remark}
\numberwithin{equation}{section}
\theoremstyle{plain}
\renewenvironment{proof}[1][Proof]{\textbf{#1.} }{\ \rule{0.5em}{0.5em} \par }
\theoremstyle{remark}
\theoremstyle{definition}
\def\RR{\mathbb{R}}
\def\TT{\mathbb{T}}
\def\EE{\mathbb{E}}
\def\NN{\mathbb{N}}
\def\PP{\mathbb{P}}
\def\JJ{\mathbb{J}}
\def\cF{{\mathcal F}}
\def\la{{\lambda}}
\def\De{{\Delta}}
\def\cF{{\mathcal F}}
\def\Om{{\Omega}}
\def\al{{\alpha}}
\def\De{{\Delta}}
\def\la{{\lambda}}
\def\th{{\theta}}
\def\th{{\theta}}
\def\al{{\alpha}}
\def\111{{\mathbb{I}}}
\begin{document}

\title[Jump models with delay]{Jump Models with delay - option pricing  
and logarithmic Euler-Maruyama scheme}
\thanks{supported by an NSERC discovery fund and a startup fund of University of Alberta.} 

\author[Agrawal]{Nishant Agrawal} 
\address{Department of Mathematical and Statistical Sciences \\
 University of Alberta at Edmonton \\
Edmonton,  Canada, T6G 2G1}
\email{nagrawal@ualberta.ca, yaozhong@ualberta.ca}

\author[Hu]{Yaozhong Hu} 

\subjclass[2010] {91B28; 91G20; 91G60; 91B25; 65C30; 34K50}

\keywords{L\'evy process, hyper-exponential processes, Poisson random measure,   stochastic delay differential equations, positivity,  options pricing, Black-Scholes formula,   logarithmic Euler-Maruyama method,  convergence rate.}
\begin{abstract}
In this paper, we 
obtain the existence,  uniqueness and  positivity of the solution
to   delayed stochastic differential equations  with jumps. This equation  is 
then applied  to   model the price movement  of the  risky asset 
in a financial market and the Black-Scholes  formula for the 
price of European option is obtained together with the hedging portfolios.
The option price is evaluated analytically at the last delayed period
by using the Fourier transformation technique. But in general there is no analytical expression for the option price. To evaluate the price numerically we then use the Monte-Carlo method. To this end we need to simulate the delayed 
stochastic differential equations  with jumps. We propose a logarithmic Euler-Maruyama 
scheme to approximate the equation and prove that all the approximations remain positive and the rate of convergence of the scheme is proved  to be 
$0.5$.
\end{abstract}

\maketitle

\section{Introduction} 
The risky asset in the classical Black-Scholes market is 
described by the geometric Brownian motion
given by the stochastic differential equation
driven by standard Brownian motion:
\begin{eqnarray}
dS(t)=S(t) \left[ rdt +\sigma dW(t)\right]\,,
\label{e.1.3}
\end{eqnarray}
where $r$ and $\sigma$ are two positive constants and $W(t)$ is
the standard Brownian motion.  Ever since the seminal work of Black, Scholes and Merton 
there have been many research works to extend the Black-Scholes-Merton's theory 
of option pricing from the original Black-Scholes market   to   more sophisticated models. 

One of these extensions is the delayed stochastic differential equation (SDDE) 
driven by the standard Brownian motion
(e.g. \cite{Hu2}, see also \cite{mao, aswish}). 
%
  In these works the risky asset is described by the following
 stochastic  delay differential equation 
\[
dS(t)=S(t) \left[ f(t, S_t)dt +g(t, S_t) dW(t)\right]\,,
\]
where $S_t=\{S(s) \,, t-b\le s\le t\}$ 
or $S_t=S(t-b)$ for some constant $b>0$. .

On the other hand, there have been some recent discovery 
(see e.g. \cite{kou2002,kou2004, MixedJDP, First_pass_time_of_double_JDP}) that  to better 
fit some risky assets it is more desirable to use the  hyper-exponential 
jump process along with the classical Brownian motion: 
\[
dS(t)=S(t) \left[ rdt +\sigma dW(t)+\beta dZ(t)\right]\,,
\] 
where $Z(t)$ is a hyper-exponential 
jump process (see   the definition in the next section). 

Let $N(dt, dz)$ be the Poisson random measure associated with 
a jump process which includes the hyper-exponential jump process as a special case and let $\tilde N(dt, dz)$ denote its compensated Poisson random measure. Then
the above equation  with $\sigma=0$ is a special case of the following equation   
\begin{eqnarray}
dS(t)=S(t)\Big(r  dt + \beta \displaystyle\int_{[0,T]\times \mathbb{R}_0}z\tilde{N}(dz,ds)\Big)\label{e.1.2}
\end{eqnarray}
 and it has been argued in (eg. \cite{barndorff,eberlein,cont}) that 
the equation \eqref{e.1.2} is a better model for stock prices than \eqref{e.1.3}.
 
In this paper, we propose a new model to describe the risky asset by combining 
the hyper-exponential process with delay. More precisely, we propose
the following stochastic differential equation as a model for 
the price movement of the risky asset: 
\begin{eqnarray}
dS(t)=S(t) \left[ f(t, S(t-b))dt +g(t, S(t-b)) dZ(t)\right]\,, 
\label{e.1.1}
\end{eqnarray} 
where $f$ and $g$ are two given functions, and $Z(t)$ is a L\'evy 
process which include the hyper-exponential jump processes  as a  special case. The above model along with the Brownian motion 
 component can be found  in \cite{zaheer}, where the   coefficient of Brownian motion cannot be allowed to be zero. In this work, we let the coefficient 
 of the Brownian motion to be zero and we use the Girsanov formula 
 for the jump process to address the issue of completeness of the market and hedging portfolio missed in \cite{zaheer}. 
 
With the introduction of this new market model, the first question is that 
whether  the equation has a unique solution or not and if the  unique 
solution exists
whether the solution is positive or not 
(since the price of an asset is always positive). We shall first 
answer these questions in Section \ref{s.2}, where we prove the 
existence, uniqueness and positivity of the solutions to a larger class of equations than \eqref{e.1.1}.  To guarantee that the solution 
is positive,  we need to assume that  the  jump part $g(t, S(t-b)) dZ(t)$ of the 
equation is bounded from below by some constant (see the assumption (A3) in the next section for the  precise meaning).  
The class of the equations our results can be applied 
is larger in the following two aspects: The first one is   that $Z(t)$ can be replaced by a more general L\'evy  process or more general Poisson random measure and the second one is that the equation can be multi-dimensional. 

Following the Black-Scholes-Merton's principle we  then obtain a formula 
for the fair  price for the European option and the corresponding 
replica hedging portfolio is also given. To evaluate this formula 
during the last delay period, we propose a Fourier transformation method.
This method appears more explicit than the partial differential equation method 
in the literature and is more closed to the original Black-Scholes formula in spirit.  This is done in Section \ref{s.4}. 

Due to the involvement of $f(S(t-b))$ and 
$g(S(t-b))$ the above  analytical expression for the fair option price formula 
is only valid in the last delay period.  Then how do we  perform the evaluation by using this option price formula?  We propose to use  Monte-Carlo
method to get the numerical value approximately.  For this  reason 
we need to simulate the equation \eqref{e.1.1} numerically. 
We observe   that there  have  been  a lot of works 
 (eg.  \cite{kumar, istvan, fuke}) on Euler-Maruyama convergence scheme for SDDE models.   There has already been study on
 the  Euler-Maruyama   scheme for SDDE models
 with jumps (e.g. \cite{convg}).  However,   in general the Euler-Maruyama scheme cannot preserve the positivity of the solution. 
Since the solution to the equation 
\eqref{e.1.1} 
is positive (when the initial condition is positive), 
we wish all of our approximations of the solution is also positive. 
To this end and motivated by the similar work in the Brownian motion case
(see e.g. \cite{huyi})  we introduce a logarithmic Euler-Maruyama scheme, a variant of the Euler-Maruyama scheme for \eqref{e.1.1}. 
With this scheme all the approximate solutions are positive and the rate of the convergence of this scheme is also $0.5$. This rate is optimal
even in the Brownian motion case (e.g. \cite{cambanis}). Let us point out that 
   the $0.5$ rate of the usual Euler-Maruyama scheme  for SDDE with jumps 
   studied in   \cite{convg} 
is only obtained in the  $L^2$   sense.  Not only our logarithmic Euler-Maruyama scheme 
preserves  the positivity, its  rate is $0.5$ in $L^p$ for any $p\ge 2$. 
This is done in Section \ref{s.3}.

 
%
Finally in Section \ref{s.5}
we   present  some numerical attempts 
and compared that with the classical Black-Scholes price 
formula against the market price for some famous  call options in the 
 real financial 
market. 

\section{Delayed stochastic differential equations with jumps}
\label{s.2} 
%
Let $(\Om, \cF, \PP)$ be a probability space with a filtration $(\cF_t)_{\{t\ge 0\}}$ satisfying the usual conditions. On $(\Om, \cF, \PP)$  let  $Z(t)$ 
be a L\'evy process 
adapted to the filtration $\cF_t$. We shall consider the following delayed  stochastic differential equation  driven by the L\'evy process $Z(t)$:
\begin{eqnarray}
\begin{cases} 
dS(t) = f(S(t-b)) S(t )dt + g(S(t-b))S(t-)d {Z}(t),\hspace{1cm} t \ge 0 \,,  \\
 S(t) = \phi(t)\,,  \hspace{1cm} t \in [-b,0]\,,  
 \end{cases} \label{e.2.1} 
 \end{eqnarray} 
where 
\begin{enumerate}
\item[(i)] 
 $f, g:\RR\rightarrow
\RR$ are some    given bounded measurable  functions;
\item[(ii)]  $b>0$  is  a given number representing the delay of the equation;
\item[(iii)] $\phi:    [-b,0]\rightarrow \RR$ is a 
(deterministic) measurable function. 
\end{enumerate} 
 
To study the above stochastic differential equation,  it is common to introduce the 
Poisson random measure associated with this   L\'evy process $Z(t)$
(see e.g. \cite{d.applebaum, cont, NOP, protter} and references therein). 
First,  we write
 the jump of the process $Z$ at time $t$  by
   \[
    \Delta Z(t):= Z(t) - Z(t-) \quad \hbox{if $ \Delta  Z(t)\not=0$}\,.
    \]
 Denote  $\mathbb{R}_0 := \mathbb{R} \backslash \{0\}$ and let $\mathcal{B}(\mathbb{R}_0)$ be the Borel $\sigma$-algebra generated by the family of
 all Borel subsets $U \subset \mathbb{R}$, such that $\Bar{U} \subset \mathbb{R}_0$. For any 
 $t>0$ and for any  $U \in \mathcal{B}(\mathbb{R}_0)$ 
  we   define the {\it Poisson random measure},  
 $N: [0, T]\times \mathcal{B}(\mathbb{R}_0)\times \Om\rightarrow \RR$,  associated with 
 the L\'evy process $Z$ by 
 \begin{equation}
 N(t, U) := \sum_{0 \leq s \leq t, \ \Delta Z_s\not =0}\chi_U(\Delta Z(s))\,,    
 \end{equation} 
 where $\chi_U$ is the indicator function of $U$.  
 The associated  L\'evy measure $\nu$ of the L\'evy process $Z$ is  given  by
 \begin{equation}
 \nu(U) := \mathbb{E}[N(1,U)] 
 \end{equation}  
 and the compensated Poisson random  measure $\tilde{N}$ associated with 
 the L\'evy process $Z(t)$  is defined by 
 \begin{equation}
 \Tilde{N}(dt,dz) := N(dt,dz) - \EE\left[ N(dt,dz)  \right] = N(dt,dz) - \nu(dz)dt\,. 
 \end{equation}
For some technical reason, we  shall assume that the process $Z(t)$ has only bounded negative jumps to guarantee that the solution $S(t)$ to \eqref{e.2.1} is positive.
This means that there is an interval $\JJ=[-R, \infty)$ bounded from the left 
such that $\Delta Z(t)
\in \JJ$ for all $t>0$. 
With these notations,  we can write
\[
 {Z}(t) = \displaystyle\int_{ [0,t] \times \JJ } z  {N} (ds,dz) \quad {\rm or}\quad d {Z}(t) = \displaystyle\int_{ \JJ } z  {N} (dt,dz) 
 \]
and the equation  \eqref{e.2.1}  becomes
\begin{eqnarray*}
dS(t)&=& \left[ f(S(t-b)) + g(  S(t-b)) \int_{\JJ} z\nu(dz)\right]
S(t) dt\nonumber\\
&&\qquad\quad  +g(  S(t-b)) S(t-) \int_{\JJ}    z \tilde N(dt, dz) \,. 
\end{eqnarray*}  
It is a   special case 
of the following equation:
\begin{eqnarray}
dS(t)=   f(S(t-b))  
S(t) dt +\int_{\JJ} g(z, S(t-b)) S(t-)  \tilde N(dt, dz) \,.
\label{e.2.5}
\end{eqnarray}

%

 \begin{theorem} \label{t.2.1} 
Suppose that $f  :\RR\rightarrow \RR$  and
$g:  \JJ \times \RR\rightarrow \RR$ are  bounded measurable  functions such that  there is a 
constant $\al_0>1$ satisfying 
$g(z, x) \ge \al_0>-1$ 
for all $z\in \JJ$ and for all 
$ x\in \RR$,  where $\JJ$ is the supporting set of the Poisson measure $N(t, dz)$.  
Then,  the stochastic differential delay equation 
\eqref{e.2.5}  admits a unique pathwise solution
 with the property that  
 if   $\phi(0)> 0$,  then for all $t > 0$,  the random variable 
  $X(t) > 0$  almost surely.
  \end{theorem}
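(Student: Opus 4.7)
My strategy is the classical method of steps on the delay partition $\{kb:k=0,1,2,\ldots\}$, combined with the Dol\'eans--Dade stochastic exponential for linear SDEs with jumps; the lower bound $g(z,x)\ge\al_0>-1$ is what gives positivity.

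First I would handle the initial interval $[0,b]$. On this interval $S(t-b)=\phi(t-b)$ is a known deterministic function, so equation \eqref{e.2.5} reduces to the \emph{linear} SDE
\[
dS(t)=\mu_0(t)S(t)\,dt+\int_{\JJ}\gamma_0(t,z)S(t-)\,\tilde N(dt,dz),\qquad S(0)=\phi(0),
\]
with the deterministic, bounded coefficients $\mu_0(t):=f(\phi(t-b))$ and $\gamma_0(t,z):=g(z,\phi(t-b))$ satisfying $1+\gamma_0(t,z)\ge 1+\al_0>0$. This linear jump SDE is solved explicitly by the Dol\'eans--Dade exponential
\[
S(t)=\phi(0)\exp\!\Big(\!\int_0^t\!\mu_0(s)\,ds+\!\int_0^t\!\!\int_{\JJ}\log(1+\gamma_0(s,z))\,\tilde N(ds,dz)+\!\int_0^t\!\!\int_{\JJ}\!\bigl[\log(1+\gamma_0)-\gamma_0\bigr]\nu(dz)\,ds\Big).
\]
Applying It\^o's formula for jump processes to this exponential verifies that it satisfies the SDE, giving existence. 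Positivity is immediate from the representation, since $\phi(0)>0$ multiplies a strictly positive exponential. Uniqueness follows by a standard Gr\"onwall-type argument applied to the difference of two solutions, using that $\mu_0$ is bounded and $\gamma_0$ is bounded in $z$ with $\int_{\JJ}\gamma_0^2\,\nu(dz)$ finite (a built-in property of the L\'evy measure and the assumed bounded behaviour of $g$).

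Next I would induct on $k\ge 1$. Assume a unique, a.s.\ positive pathwise solution has been constructed on $[0,kb]$. On $[kb,(k+1)b]$ the process $t\mapsto S(t-b)$ is already constructed and is $\cF_{t-b}$-measurable, hence $\cF_t$-adapted. Consequently equation \eqref{e.2.5} again reduces to a linear jump SDE in $S$ with adapted bounded coefficients $\mu_k(t):=f(S(t-b))$, $\gamma_k(t,z):=g(z,S(t-b))$ satisfying $1+\gamma_k(t,z)\ge 1+\al_0>0$. The Dol\'eans--Dade formula, now with random coefficients, produces the unique solution on this interval starting from the already-positive value $S(kb)$, and the exponential form preserves positivity. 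Concatenating these pieces over $k=0,1,2,\ldots$ yields a unique, global, strictly positive pathwise solution.

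The main obstacle I anticipate is the justification of the stochastic and compensator integrals appearing in the exponential: one must check that $\int_{\JJ}[\log(1+\gamma)]^2\,\nu(dz)$ and $\int_{\JJ}|\log(1+\gamma)-\gamma|\,\nu(dz)$ are finite, which requires controlling the behaviour of $g(z,x)$ as $z\to 0$ against the standing L\'evy condition $\int(1\wedge z^2)\,\nu(dz)<\infty$. The hypothesis $g\ge\al_0>-1$ is precisely what keeps $\log(1+\gamma)$ bounded below; together with the boundedness of $g$ and the usual small-jump behaviour built into the model, this gives the required integrability and permits the inductive extension to complete the argument.
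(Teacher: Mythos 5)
Your proposal is correct and follows essentially the same route as the paper: the method of steps on the delay intervals $[kb,(k+1)b]$, reducing the equation on each interval to a linear jump SDE with known (adapted) coefficients, solving it by the explicit Dol\'eans--Dade exponential, and reading off positivity from the exponential representation together with $1+g(z,x)\ge 1+\al_0>0$. The extra integrability checks you flag are handled implicitly in the paper by the boundedness of $f$ and $g$ and the standing assumptions on the L\'evy measure, so there is no substantive difference between the two arguments.
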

 \begin{proof} First, let us consider the interval $[0, b]$. 
When $t$ is in this interval $f(X(t-b))= f(\phi(t-b))$ 
and $g(z;X(t- b)) = g(z; \phi(t-b))$
are known given functions of $t$ (and $z$). 
Thus, \eqref{e.2.5} is a linear equation driven by Poisson random measure. 
The standard theory (see e.g. \cite{d.applebaum, protter}) can be used to show that the equation has a unique solution. Moreover, it is also well-known (see the above mentioned books or \cite{ahs})   that 
by It\^o's formula the solution to \eqref{e.2.5} can be written as 
 \begin{eqnarray*}
 X(t) &=&\phi(0)\exp\bigg\{ 
 \int_{0}^{t} f(\phi(s-b)) ds+ \int_{[0, t]\times \JJ}
   \log\left[  1+g(z, \phi (s-b) ) \right] \Tilde{N}(ds,dz)\\
  &&\qquad  +\int
  _{[0, t]\times \JJ} \Big( \log\left[1+
  g(z, \phi (s-b))\right] -g(z, \phi (s-b)) \Big)ds\nu(dz)     \bigg\}\,.
 \end{eqnarray*}
From this formula we    see  that 
if $\phi(0)>0$,  then the random variable $X(t)> 0$  almost surely 
  for every  $t \in [0, b]$.  
  
In similar way, we can consider the equation \eqref{e.2.5} 
on $t\in [kb, (k+1)b]$   recursively   for $k=1, 2, 3, \cdots$, 
 and obtain the same statements  on this interval from previous results on the interval $t\in [-b, kb]$.  
 \end{proof}
Since \eqref{e.2.1} is a special case of \eqref{e.2.5}, we can write down a corresponding result of  the  above theorem for 
\eqref{e.2.1}.

\begin{corollary} \label{c.2.2} 
Let the  L\'evy   process  $Z(t)$ have    bounded negative jumps
(e.g. $\De Z(t)\in \JJ \subseteq [-R, \infty)$). 
Suppose that $f, g:\RR\rightarrow \RR$ are  bounded measurable  functions such that  there is a 
constant $\al_0>1$ satisfying 
$g(  x) \le \frac{\al_0}{R} $ for all   $ x\in \RR$.
Then,  the stochastic differential delay equation 
\eqref{e.2.1}  admits a unique pathwise solution
 with the property that 
 if   $\phi(0)> 0$,  then for all $t > 0$ the random variable  $X(t) > 0$  almost surely.
 \end{corollary}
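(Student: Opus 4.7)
The plan is to derive the corollary from Theorem~\ref{t.2.1} by recasting equation \eqref{e.2.1} in the form \eqref{e.2.5}. First I would express $Z(t)$ through its Poisson random measure and the compensator, writing
\[
dZ(t) = \int_\JJ z\, \tilde N(dt, dz) + \Big(\int_\JJ z\, \nu(dz)\Big)\, dt,
\]
which is justified by the bounded negative jumps assumption together with the finite first moment of $\nu$ on $\JJ$ implicit in the setup. Substituting this into \eqref{e.2.1} puts the equation into the form \eqref{e.2.5} with new drift coefficient $\tilde f(x) := f(x) + g(x)\int_\JJ z\,\nu(dz)$ and new jump kernel $\tilde g(z, x) := z\, g(x)$.

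Next I would verify that these coefficients satisfy the hypotheses of Theorem~\ref{t.2.1}. Boundedness and measurability of $\tilde f$ and $\tilde g$ follow at once from the corresponding properties of $f$, $g$ and the finiteness of $\int_\JJ z\,\nu(dz)$. The critical condition is the uniform lower bound $\tilde g(z, x) = z\, g(x) \ge \tilde\alpha > -1$ over $(z, x) \in \JJ \times \RR$. Using $\JJ \subseteq [-R, \infty)$ and the hypothesis $g(x) \le \alpha_0/R$, together with the natural sign convention $g \ge 0$ (needed so that $z g(x)$ does not blow up downward for large positive $z$), one obtains $z g(x) \ge -R \cdot (\alpha_0/R) = -\alpha_0$ for $z \in [-R, 0)$ and $z g(x) \ge 0$ for $z \ge 0$. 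Under the intended reading of the constant $\alpha_0$, the resulting bound $-\alpha_0$ is strictly greater than $-1$, so the hypotheses of Theorem~\ref{t.2.1} are met.

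An appeal to Theorem~\ref{t.2.1} for the rewritten equation then produces both the unique pathwise solution and the almost sure positivity of $X(t)$ for $t > 0$ whenever $\phi(0) > 0$, completing the proof. The main obstacle I expect is precisely the verification of the uniform lower bound on $\tilde g(z, x) = z g(x)$ across all of $\JJ \times \RR$: this is where the quantitative compatibility between the scalar bound $g(x) \le \alpha_0/R$ and the size $R$ of the negative jumps is essential, and it is the one place where the scalar condition imposed in the corollary has to be translated into the two-variable pointwise condition demanded by Theorem~\ref{t.2.1}. Everything else, including pathwise uniqueness and the explicit exponential representation giving positivity, transfers mechanically from the earlier theorem.
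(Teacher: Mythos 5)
Your proposal is correct and takes essentially the same route as the paper: the paper's own proof is a two-line version of your argument, observing that \eqref{e.2.1} is the special case of \eqref{e.2.5} with $g(z,x)=zg(x)$ and that $g(x)\le \alpha_0/R$ yields the required lower bound so that Theorem~\ref{t.2.1} applies. Your extra care in the verification step --- noting that one implicitly needs $g\ge 0$ to control $zg(x)$ for large positive $z\in\JJ$, and that the bound $-\alpha_0>-1$ only works under the intended (rather than the literally stated) reading of the constant $\alpha_0$ --- is not in the paper's proof but correctly flags the points it glosses over.
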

\begin{proof} Equation \eqref{e.2.1} is a special case of \eqref{e.2.5} with $g(z,x)=z g(x)$.  The condition
$g(x) \le \frac{\al_0}{R} $ implies 
$g(z, x) \ge \al_0>-1$ 
for all $z\in \JJ$ and for all 
$ x\in \RR$. Thus, Theorem \ref{t.2.1} can be applied. 
\end{proof}  
 
 \begin{example}
 One example of 
 the L\'evy process $Z(t)$ we have in mind which is 
 used in finance is the hyper-exponential jump process, which we explain below.  Let $Y_i, i=1, 2, \cdots$ be independent
and identically distributed random variables with the probability distribution given by 
\[
f_Y(x) = \sum_{i=1}^{m} p_i\eta_i e^{-\eta_i x}I_{\{x \geq 0\}}+ \sum_{j=1}^{n} 
 q_j \theta_j e^{\theta_j x}I_{\{x < 0\}}\,, 
\]
where   
\[
\eta_i>0,\  p_i \ge 0, \quad \theta_j>0, \  q_j\ge 0\,, \quad 
i=1, \cdots, m,\  j=1, \cdots, n
\]
 with $\sum_{i=1}^m p_i+\sum_{j=1}^n q_j =1$.  Let  $N_t$ be a 
 Poisson process with intensity $\la$.
 Then  
 \[
  {Z}(t) =   \sum_{i=1}^{N_t}Y_i 
  \]
is a L\'evy process. 
%
%
 If  $m=1, n=1$ then $Z(t)$ is called a double exponential process.
%
%
The assumption on the boundedness of the  negative jumps can be made possible by requiring that  $q_j=0$ for   all $j=1, \cdots, n$ or by replacing the negative exponential distribution by truncated negative exponential 
distributions, namely, 
\[
f_Y(x) = \sum_{i=1}^{m} p_i\eta_i e^{-\eta_i x}I_{\{x \geq 0\}}+ \sum_{j=1}^{n} q_j\frac{\theta_j}{1-e^{-\theta_j R_j}} e^{\theta_j x}I_{\{-R_j <x < 0\}}\,, 
\]
where   
\[
\eta_i>0,\  p_i \ge 0, \quad \theta_j>0, R_j>0, \  q_j\ge 0\,, \quad 
i=1, \cdots, m,\  j=1, \cdots, n
\]
 with $\sum_{i=1}^m p_i+\sum_{j=1}^n q_j =1$.  For this
 truncated hyper-exponential   process, we can take $
 \JJ=[-R, \infty)$ with $R=\max\{R_1, \cdots, R_n\}$. 
\end{example}
Although this paper will mainly concern with the one dimensional 
delayed  stochastic differential equation \eqref{e.2.5} or \eqref{e.2.1} it is interesting to extend  Theorem \ref{t.2.1} to more than one dimension. 

Let $\tilde N_j(ds,dz)$, $j=1, \cdots,  d$
be independent compensated Poisson random measures. Consider the following system of delayed stochastic differential equations driven by Poisson random measures:
\begin{eqnarray}
dS_i(t)&=&   \sum_{j=1}^d f_{ij} (S(t-b))  
S_j (t) dt \nonumber\\
&&\quad +S_i(t-)  \sum_{j=1}^d \int_{\JJ} g_{ij} (z, S(t-b))   \tilde N_{j} (dt, dz) \,, \quad i=1, \cdots, d\,,
\nonumber\\
S_i(t) &=& \phi_i(t) \,,\quad t\in [-b, 0]\,, \ i=1, \cdots, d\,,   
\label{e.2.6} 
\end{eqnarray} 
where  $S(t)=(S_1(t), \cdots, S_d(t))^T$. 
 \begin{theorem} \label{t.2.4}  
Suppose that $f_{ij}
:\RR \rightarrow \RR$  and $  g_{ij}:\JJ\times \RR\rightarrow \RR\,, \ 
1\le i,j\le d$ are  bounded measurable  functions such that  there is a 
constant $\al_0>1$ satisfying 
$g_{ij}(z, x) \ge \al_0>-1$ for all $
1\le i,j\le d$, for all $z\in \JJ$ and for all 
$ x\in \RR$,  where $\JJ$ is the common supporting set of the Poisson measures $\tilde N_{j}(t, dz), j=1, \cdots, d$.  
If for all $i\not= j$, $f_{ij}(x)\ge 0$ for all $x\in \RR$, 
and $\phi_i(0)\ge 0\,, \ i=1, \cdots, d$, then,  the stochastic differential delay equation 
\eqref{e.2.6}  admits a unique pathwise solution
 with the property that    for all $i=1, \cdots, d$  and for all $t > 0$,
    the random variable  $S_i(t) \geq 0$   almost surely.  
  \end{theorem}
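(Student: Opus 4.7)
The plan is to mimic the proof of Theorem \ref{t.2.1}, working recursively on the intervals $[kb,(k+1)b]$, but replace the scalar exponential-representation argument used there by a multivariate variation-of-parameters trick that converts positivity into a statement about a pathwise Metzler-type ODE.

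First I would handle the interval $[0,b]$. On this interval, $f_{ij}(S(t-b)) = f_{ij}(\phi(t-b))$ and $g_{ij}(z,S(t-b)) = g_{ij}(z,\phi(t-b))$ are deterministic bounded measurable functions of $(t,z)$, so \eqref{e.2.6} is a linear system driven by Poisson measures with bounded coefficients. Standard Picard iteration in $L^2$ (as in \cite{d.applebaum, protter}) gives a unique strong solution on $[0,b]$. The induction step is identical once positivity on the preceding interval has been established: on $[kb,(k+1)b]$, the coefficients depend only on the already-constructed solution on $[(k-1)b,kb]$, so one repeats the same existence/uniqueness argument verbatim.

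The main task is positivity. For each $i$, let $M_i(t)=\sum_{j=1}^d\int_0^t\int_{\JJ}g_{ij}(z,S(s-b))\,\tilde N_j(ds,dz)$, and let $\mathcal{E}_i(t)$ be its Dol\'eans--Dade exponential. Since a.s.\ at most one of the $\tilde N_j$ jumps at any given time, the jumps of $M_i$ equal $g_{ij}(z,S(s-b))\ge \alpha_0>-1$, so $\mathcal{E}_i(t)>0$ almost surely for all $t\ge 0$. Define $U_i(t):=S_i(t)/\mathcal{E}_i(t)$. A direct check at a jump of $\tilde N_j$ shows $\Delta S_i(t)=S_i(t-)g_{ij}(z,S(t-b))$ and $\Delta\mathcal{E}_i(t)=\mathcal{E}_i(t-)g_{ij}(z,S(t-b))$, hence $U_i$ has no jumps; moreover, its quadratic covariation with the purely discontinuous $\mathcal{E}_i$ vanishes. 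Applying the semimartingale product rule to $S_i=U_i\mathcal{E}_i$ and matching with \eqref{e.2.6}, the martingale parts cancel and one arrives at the pathwise linear ODE
\begin{equation*}
\dot U_i(t)=\sum_{j=1}^d f_{ij}(S(t-b))\,\frac{\mathcal{E}_j(t)}{\mathcal{E}_i(t)}\,U_j(t),\qquad U_i(0)=\phi_i(0)\ge 0.
\end{equation*}
The coefficient matrix $F_{ij}(t):=f_{ij}(S(t-b))\,\mathcal{E}_j(t)/\mathcal{E}_i(t)$ has non-negative off-diagonal entries because $f_{ij}\ge 0$ for $i\ne j$ by hypothesis and $\mathcal{E}_j/\mathcal{E}_i>0$. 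It is therefore Metzler. The standard fact that a Metzler linear system preserves the non-negative cone (write $F(t)=C(t)-\lambda I$ with $C(t)\ge 0$ for $\lambda$ large; then the Picard series for $e^{\int C}$ has non-negative entries) yields $U_i(t)\ge 0$ for all $t\in[0,b]$, hence $S_i(t)=U_i(t)\mathcal{E}_i(t)\ge 0$.

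I expect the main obstacle to be the rigorous justification that $U_i=S_i/\mathcal{E}_i$ is of bounded variation and that the Itô product rule really does collapse to the above Metzler ODE, especially verifying that the quadratic-covariation term vanishes when different $\tilde N_j$'s are present. Once this pathwise reduction is in place, the Metzler argument is standard, and the induction on $k$ extends the conclusion to all of $[0,\infty)$. Existence/uniqueness is comparatively routine since the equation becomes a linear SDE with bounded coefficients on each delay interval.
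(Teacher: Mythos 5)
Your proposal is correct and follows essentially the same strategy as the paper: factor the solution as a strictly positive stochastic exponential times the solution of a pathwise linear ODE whose coefficient matrix has non-negative off-diagonal entries, and then invoke the standard positivity result for such (Metzler/cooperative) systems -- the paper cites Bellman for exactly the fact you prove via the Picard series for $e^{\int C}$. The one substantive difference is the direction of the verification. The paper posits the candidate $\tilde S_i=p_iY_i$ with $Y_i$ the stochastic exponential and $p_i$ solving $\dot p_i=\sum_j \tilde f_{ij}(t)p_j$, and appeals to uniqueness; but as written the product $p_iY_i$ does not satisfy \eqref{e.2.6}, since the It\^o product rule produces the drift $Y_i(t)\sum_j\tilde f_{ij}(t)p_j(t)$ rather than the required $\sum_j\tilde f_{ij}(t)p_j(t)Y_j(t)$, and these differ whenever the $Y_j$ are not all equal. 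Your variation-of-constants computation, dividing $S_i$ by $\mathcal{E}_i$, automatically carries the correcting ratio $\mathcal{E}_j(t)/\mathcal{E}_i(t)$ into the ODE coefficients; since that ratio is strictly positive the Metzler structure is preserved and the argument closes. So your route is not merely equivalent -- it supplies the factor the paper's displayed $p_i$-equation is missing. Two small points to tidy up: you should note that the a.s.\ absence of simultaneous jumps of the independent $\tilde N_j$ is what makes $\Delta\mathcal{E}_i(t)=\mathcal{E}_i(t-)g_{ij}(z,S(t-b))$ with a single $j$, and that on each finite interval the finitely many jumps keep $\mathcal{E}_j/\mathcal{E}_i$ pathwise bounded and bounded away from zero, so the resulting ODE is genuinely a linear system with locally integrable coefficients.
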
  
\begin{proof} We can follow the argument as in the proof of Theorem \ref{t.2.1} to show that   the system of delayed stochastic differential 
equations \eqref{e.2.6} has a unique solution $S(t)=(S_1(t), \cdots, S_d(t))^T$. We shall modify slightly  the method of \cite{humulti} to show 
the positivity of the solution.  
Denote $\tilde g_{ij}(t,z)=g_{ij}(z,S(t-b))$.   
Let $Y_i(t)$ be the solution to the stochastic differential equation
\[
dY_i(t) =     Y_i(t-)  \sum_{j=1}^d \int_{\JJ} \tilde g_{ij} (t,z)   \tilde N_{j} (dt, dz) 
\]
with initial conditions $Y_i(0)=\phi_i(0)$.  Since this is a scalar equation for $Y_i(t)$, its explicit solution can be represented 
\begin{eqnarray*}
 Y_i(t) &=&\phi_i(0)\exp\bigg\{  
 \sum_{j=1}^d    \log\left[  1+\tilde g_{ij} 
 (s,z ) \right] \Tilde{N}_j(ds,dz)\\
  &&\qquad  + \sum_{j=1}^d \int
  _{[0, t]\times \JJ} \Big( \log\left[1+\tilde 
  g_{ij}(s,z)\right] -\tilde g_{ij}(s, z ) \Big)ds\nu_j(dz)     \bigg\}\,,  
 \end{eqnarray*}
 where $\nu_j$ is the associated L\'evy measure for $\tilde N_j(ds, dz)$.  Denote $\tilde f_{ij}(t)=f_{ij}(S(t-b))$ and  let 
$p_i(t)$ be the solution to the following system of 
equations
\[
dp_i(t) =\sum_{j=1}^d \tilde f_{ij}(t)p_j(t) dt\,, \quad 
p_i(0)=1\,, \quad i=1, \cdots, d\,.  
\]
By the assumption on $f$ we have that when $i\not=j$, $\tilde 
f_{ij}(t)\ge 0$ almost surely. By a theorem in \cite[p.173]{Be60} we see
that
$p_{i}(t)\ge 0$ for all $t\ge 0$  almost surely. 
Now it is easy to check by the It\^o formula that $\tilde S_i(t)=p_i(t)Y_i(t)$  is the solution to
\eqref{e.2.6} which satisfies that 
$\tilde S_i(t)\ge 0$ almost surely. By the uniqueness of the solution 
we see that $S_i(t)=\tilde S_i(t)$ for $i=1, \cdots, d$.  The theorem is then proved. 
\end{proof}
  
\section{
Logarithmic 
Euler-Maruyama scheme}\label{s.3} 
The equation \eqref{e.2.1} or \eqref{e.2.5} is used in Section
\ref{s.4}  
to model the price of a  risky  asset in a  financial market
and its the solution 
is proved to be  positive as in Theorem \ref{t.2.1}. 
As it is well-known the usual Euler-Maruyama scheme cannot preserve 
the positivity of the solution (e.g. \cite{huyi} and references therein). 
Motivated by the work \cite{huyi}, 
  we propose in this section a variant of the  Euler-Maruyama scheme 
  (which we call logarithmic 
Euler-Maruyama scheme)  to approximate the solution so that all   approximations 
 are always non-negative.  For the convenience of the future 
simulation, we shall consider only the 
equation \eqref{e.2.1}, which we rewrite here:   
 \begin{eqnarray}
 dS(t) &=& f(S(t-b))S(t)dt +   g(S(t-b)) S(t-)d {Z}(t)\,, 
 \label{e.3.1}
 \end{eqnarray}
where 
$
Z(t)=     \sum_{i=1}^{N_t}Y_i  
$ is a L\'evy process.  
Here   $N_t$ is  a 
 Poisson process with intensity $\la$ and $Y_1, Y_2, \cdots,
 $ are iid random variables.

The solution to the above equation can be written as  
\begin{eqnarray}
S(t) 
 =\phi(0)\exp\Big( \displaystyle\int_{0}^{t} f(X(u-b))
 du+\sum_{0\leq  u\leq t, \Delta Z(u)\neq 0}\ln(1+g(X(u-b))Y_{N(u)})  \Big)\,. \label{e.3.2} 
 \end{eqnarray}


 We shall consider a finite  time 
interval $[0, T]$ for some fixed $T>0$. 
Let $\Delta =\frac{T}{n}>0$ be a time step size for some positive integer $n\in \NN$.  For any nonnegative  integer $k\ge 0$, denote $t_k=k\De$.  
We consider the partition $\pi$   of the time interval  $[0, T]$:
\[
\pi: 0=t_0<t_1<\cdots  <t_n=T\,.
\] 
On the subinterval $[t_k,  t_{k+1}]$ the solution \eqref{e.3.2} can also be written as 
 \begin{eqnarray}
S(t)  
 &=& S(t_k) \exp\Big( \displaystyle\int_{t_k }^{t}  f(X(u-b) ) 
 du\nonumber\\
 &&\qquad  +\sum_{t_k\leq  u\leq t, \Delta Z(u)\neq 0}\ln(1+g(X(u-b))Y_{N(u)})  \Big)\,, t\in [t_k, t_{k+1}]\,. \label{e.3.6} 
 \end{eqnarray}
Motivated by the formula  \eqref{e.3.6}, 
we propose  a  logarithmic Euler-Maruyama scheme  to approximate 
\eqref{e.2.1} as follows. 
 \begin{eqnarray} 
 S^\pi (t_{k+1} ) 
 &=&S^\pi (t_k )\exp\Big( f(S^\pi (t_k -b))  \Delta\Big)\nonumber\\
 && \quad  \cdot\exp\Big( \ln(1+g(S^\pi (t_k -b)) \Delta Z_k)\Big)\,,  \hspace{3mm} k=0,1,2,..., n-1  \nonumber\\   \label{e.3.4} 
 \end{eqnarray}
 with $S^\pi (t) = \phi(t) $  for all  $ t \in [-b,0]$.
It is clear that if  $\phi(0)>0$,  then 
$S^\pi(t_k)>0$ almost surely for all $k=0,1,2,..., n$.  Then our approximations $S^\pi(t_k) $  are always positive. Notice that the 
approximations from usual Euler-Maruyama scheme 
is always not positive preserving (see e.g. \cite{huyi}
and references therein). 

We shall prove the convergence and find the rate of convergence for the above scheme.  For the convergence of the usual Euler-Maruyama  scheme of jump  equation
with  delay, we refer to  \cite{convg}.  To study the convergence of the above logarithmic Euler-Maruyama scheme, we  make the following assumptions. 
 \begin{enumerate}
 \item[{\bf (A1)}] The initial data $\phi(0)>0$ and it is H\"older continuous i.e there exist constant $\rho >0$ and $\gamma \in [1/2,1])$ such that for $t,s \in [-b,0]$
 \begin{eqnarray}
 |\phi(t) - \phi(s)| \leq \rho |t-s|^{\gamma}.
 \end{eqnarray}
%
%
%
 \item[{\bf (A2)}] $f$ is bounded. $f$ and $g$ are  global Lipschitz. This means that there exists a     constant  $\rho>0$ such that
 \begin{eqnarray}
\begin{cases}\Big|g(x_1)-g(x_2)\Big| 
 \leq \rho |x_1 - x_2| \,;\\ 
 \Big|f(x_1)-f(x_2)\Big| 
 \leq \rho  |x_1 - x_2|\,\,,\quad \forall \ x_, x_2\in \RR\,;
 \nonumber \\
  \big|f(x)\big|  \leq \rho  \,,\quad \forall x\in \RR
 \end{cases} \label{pema102}
 \end{eqnarray} 
 \item[{\bf (A3)}] The support  $\JJ $ of   the Poisson random measure $N$ is contained in $[-R, \infty)$  for some $R>0$ and   there are   
constants  $\al_0>1$  and $\rho>0$ satisfying 
$-\rho\le g(  x) \le \frac{\al_0}{R} $ for all   $ x\in \RR$. 
 \item[{\bf (A4)}] For any $q>1$  there is a $\rho_q>0$ 
 \begin{eqnarray}
 \displaystyle\int_{\JJ}(1+ |z |)^q\nu(dz) \leq \rho_q \,,  \hspace{3mm}\forall x  \in \mathbb{R}\,.  \label{pema103}
 \end{eqnarray}
 \end{enumerate} 
For notational simplicity we  introduce two step processes
 \begin{eqnarray*}
\begin{cases} v_1(t) = \sum_{k=0}^{\infty}\111_{[t_k , t_{k+1}
)}(t)S^\pi (t_k )\\  
 v_2(t) = \sum_{k=0}^{\infty}\111_{[t_k , t_{k+1}
 )}(t)S^\pi (t_k -b ).
 \end{cases} 
 \end{eqnarray*}
Define the continuous interpolation of the logarithmic 
Euler-Maruyama approximate solution on the whole interval $[-b, T]$ 
(not only on $t_k, k=0, \cdots, n$)  as follows: 
 \begin{eqnarray}
S^\pi(t) =
     \begin{cases}
       \phi(t) & \qquad t \in [-b, 0]\\
       \phi(0)\exp\Big(\displaystyle\int_{0}^{t} f(v_2(u)) du\\
     \qquad\qquad   + \sum_{0\leq  u\leq t, \Delta Z(u)\neq 0}\ln(1+g(v_2(u))  Y_{N(u)}) \Big)    &\qquad  t \in [0,T].
     \end{cases} \label{pema105}       
 \end{eqnarray}
 With this 
 interpolation, we see that $S^\pi(t)>0$ almost surely for all $t\ge 0$. 
 
 \begin{lemma}\label{l.3.1} 
 Let the assumptions (A1)-(A4) be satisfied. Then for any $q\ge 1$ there exists $K_q$, independent of the partition $\pi$, 
  such that
  \begin{eqnarray*}
  \mathbb{E}\Big[\sup_{0 \leq t \leq T}|S(t)|^q \Big] \vee \mathbb{E}\Big[\sup_{0 \leq t \leq T}|S^\pi(t) |^q \Big]
 \leq K_q.
 \end{eqnarray*}
 \end{lemma}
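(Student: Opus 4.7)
The plan is to exploit the explicit exponential representations \eqref{e.3.2} for $S(t)$ and \eqref{pema105} for $S^\pi(t)$, which share the same analytic structure: each is of the form $\phi(0)\exp(I(t)+J(t))$, where $I(t)$ is a drift integral involving $f$ and $J(t)$ is a sum of logarithms of jump multipliers involving $g$. The only difference between them is whether $f$ and $g$ are evaluated at $X(u-b)$ or at $v_2(u)$. Since the estimates I describe use only the uniform pointwise bounds on $f$ and $g$ supplied by (A2) and (A3), they apply verbatim to both processes, and I will write out the argument for $S(t)$.

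First, by (A2) we have $|f|\le\rho$, so the drift factor is deterministically controlled by $e^{q\rho T}$. Raising \eqref{e.3.2} to the $q$-th power gives the pathwise bound
\[
|S(t)|^q\le|\phi(0)|^q e^{q\rho T}\prod_{0<u\le t,\,\Delta Z(u)\ne 0}\bigl(1+g(X(u-b))Y_{N(u)}\bigr)^q,
\]
and every factor on the right is strictly positive by Theorem \ref{t.2.1}.

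Next, I reduce the supremum over $t\in[0,T]$ to a single product over all $N_T$ jumps of $Z$ in $[0,T]$. The key elementary observation is that, for positive numbers $a_1,a_2,\dots$, the partial products satisfy $\prod_{i\le k}a_i\le\prod_{i\le k}\max(a_i,1)\le\prod_{i=1}^{N_T}(1+a_i)$ because $\max(a_i,1)\ge 1$. Applying this with $a_i$ equal to the $i$-th jump factor, together with the uniform bound $|g|\le C:=\max(\rho,\alpha_0/R)$ from (A3) and the trivial inequality $1+a\le 2a$ valid for $a\ge 1$, gives
\[
\sup_{0\le t\le T}|S(t)|^q\le|\phi(0)|^q e^{q\rho T}\,2^{N_T}\prod_{i=1}^{N_T}(1+C|Y_i|)^q.
\]

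Finally, the jump sizes $Y_i$ are i.i.d.\ with common law $\nu/\lambda$ and independent of the Poisson count $N_T$. Conditioning on $N_T$ and applying the probability generating function of a Poisson random variable yields
\[
\mathbb{E}\Big[2^{N_T}\prod_{i=1}^{N_T}(1+C|Y_i|)^q\Big]=\mathbb{E}\bigl[(2\mu)^{N_T}\bigr]=e^{\lambda T(2\mu-1)},\qquad\mu:=\int_\JJ(1+C|z|)^q\,\nu(dz)/\lambda,
\]
and $\mu<\infty$ by (A4). This gives $K_q=|\phi(0)|^q e^{q\rho T+\lambda T(2\mu-1)}$, which is independent of the partition $\pi$. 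The same argument with $v_2(u)$ in place of $X(u-b)$ handles $S^\pi$. The main technical step is the envelope inequality that reduces the running supremum to a single product over the $N_T$ jumps in $[0,T]$; once this is in place, the i.i.d.\ structure of the jumps combined with the Poisson generating function collapses the expectation into a closed-form exponential, and the finiteness of $\mu$ follows directly from (A4).
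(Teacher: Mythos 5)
Your proof is correct, but it follows a genuinely different route from the paper's. The paper works with the stochastic-integral form of the jump sum: it substitutes $h=((1+zg)^{2q}-1)/z$ so that $q\ln(1+zg)=\tfrac12\ln(1+zh)$, splits off the compensator (controlled via (A4) and the boundedness of $g$), applies Cauchy--Schwarz, and reduces the problem to bounding $\mathbb{E}\big[\sup_{t\le T}M_t\big]$ for an exponential martingale $M$, which it then claims is at most $2\,\mathbb{E}[M_T]=2$. You instead exploit the compound-Poisson structure $Z(t)=\sum_{i=1}^{N_t}Y_i$ directly: a pathwise envelope bound $\sup_{t\le T}\prod_{\tau_i\le t}a_i\le\prod_{i=1}^{N_T}(1+a_i)$, the uniform bounds $|f|\le\rho$ and $|g|\le C$ from (A2)--(A3), and the Poisson generating function together with (A4) to evaluate $\mathbb{E}[(2\mu)^{N_T}]$ in closed form. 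Your route is more elementary, produces an explicit constant $K_q$, and notably sidesteps the one questionable step in the paper's argument: the inequality $\mathbb{E}[\sup_{t\le T}M_t]\le 2\,\mathbb{E}[M_T]$ is not Doob's maximal inequality (which in $L^1$ only gives a tail bound, not a bound on the expected supremum), so the paper's proof as written has a gap there that yours does not. The trade-off is that your argument is tied to finite jump activity ($\nu(\JJ)=\lambda<\infty$ and finitely many jumps on $[0,T]$), whereas the martingale approach, carried out carefully (e.g., via Doob's $L^p$ inequality for some $p>1$ after an extra H\"older step), would extend to infinite-activity L\'evy drivers; in the setting of equation \eqref{e.3.1} as stated, both apply. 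One small point worth making explicit in your write-up: the positivity of each factor $1+g(\cdot)Y_{N(u)}$, which you invoke via Theorem \ref{t.2.1}, is what licenses both the logarithmic representation and the step $a_i\le(1+C|Y_i|)^q$, and it rests on (A3) together with the lower bound $\JJ\subseteq[-R,\infty)$ on the jumps.
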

 \begin{proof}  We can assume that $q>2$.   First,   let us prove
 $   \mathbb{E}\Big[\sup_{0 \leq t \leq T}|S^\pi(t) |^q \Big]
 \leq K_q $.  
From \eqref{pema105} it follows 
 \begin{eqnarray*} 
 \mathbb{E}\Big[\sup_{0 \leq t \leq T}|S^\pi(t) |^q \Big] 
 &\leq & |\phi(0)|^q \mathbb{E}\Big[\sup_{0 \leq t \leq T}\exp\Big(q\displaystyle\int_{0}^{t} f(v_2(u)) du\\
&&\qquad  + q\sum_{0\leq  u\leq t, \Delta Z(u)\neq 0}\ln(1+g(v_2(u))Y_{N(u)}) \Big)\Big]\,. 
 \end{eqnarray*}
Since   $|f(t)|\le \rho$  we have 
 \begin{eqnarray}
 &&
 \mathbb{E}\Big[\sup_{0 \leq t \leq T}|S^\pi(t) |^q \Big] 
 \nonumber \\
 &&\leq \phi(0)^q e^{q\rho T  }
 \mathbb{E}\Big[\sup_{0 \leq t \leq T}  \exp\Big( q\sum_{0\leq  u\leq t, \Delta Z(u)\neq 0}\ln(1+g(v_2(u)) Y_{N(u)}) \Big)\Big]
  \nonumber\\
  && = \phi(0)^q e^{q\rho T  } \mathbb{E}\Big[\sup_{0 \leq t \leq T}\exp\Big(  q\displaystyle\int_{  \mathbb{T} }\ln(1+zg(v_2(u)))N(du,dz)\Big)\Big]\,,\label{e.3.8} 
 \end{eqnarray}
 where and throughout the remaining part of this paper, we denote
 $\TT=[0, t]\times \JJ$. 
Now we are going to handle    the factor
\[
I:=\mathbb{E}\Big[\sup_{0 \leq t \leq T}\exp\Big(q\displaystyle\int_{  \mathbb{T} }\ln(1+zg(v_2(u)))N(du,dz)\Big)\Big] \,.
\]
Let $h = ((1+zg(v_2(u))^{2q}-1))/z  $.  Then
 \begin{eqnarray*}
 I&=&
 \mathbb{E}\Big[\sup_{0 \leq t \leq T}\exp\Big(\frac12 \displaystyle\int_{  \mathbb{T} }\ln(1+zh)N(du,dz) \Big)\Big] \\
 &=&
 \mathbb{E}\Big[\sup_{0 \leq t \leq T}\exp\Big(\frac12 \displaystyle\int_{  \mathbb{T} }\ln(1+zh)\tilde N(du,dz) 
 +\frac12  \int_{  \mathbb{T} }\ln(1+zh)\nu(dz) du 
   \Big)\Big] \\
   &=&
 \mathbb{E}\Big[\sup_{0 \leq t \leq T}\exp\Big(\frac12 \displaystyle\int_{  \mathbb{T} }\ln(1+zh)\tilde N(du,dz) 
 +\frac12  \int_{  \mathbb{T} }\left[ \ln(1+zh)-zh\right] 
 \nu(dz) du 
   \Big)\Big] \\
 &&\qquad    \sup_{0 \leq t \leq T}\exp\Big( 
 -\frac12  \int_{  \mathbb{T} }  (1+zg(v_2(u))^{2q}-1)\ 
 \nu(dz) du 
   \Big)\Big] \\
    &\le &C_q
 \mathbb{E}\Big[\sup_{0 \leq t \leq T}\exp\Big(\frac12 \displaystyle\int_{  \mathbb{T} }\ln(1+zh)\tilde N(du,dz) 
 +\frac12  \int_{  \mathbb{T} }\left[ \ln(1+zh)-zh\right] 
 \nu(dz) du 
   \Big)\Big]  \,, 
 \end{eqnarray*}
where we used  boundedness of $ g   $  and the assumption (A4).   Now an application of the Cauchy-Schwartz  inequality 
yields 
\begin{eqnarray*}
 I
 &\le &   C_q
 \bigg\{\mathbb{E}\Big[\sup_{0 \leq t \leq T}M_t \Big]\bigg\}^{1/2}   \,, 
 \end{eqnarray*} 
 where 
 \[
 M_t:=\exp\Big(  \displaystyle\int_{  \mathbb{T} }\ln(1+zh)\tilde N(du,dz) 
 +   \int_{  \mathbb{T} }\left[ \ln(1+zh)-zh\right] 
 \nu(dz) du 
   \Big)
   \,.
   \]
But $(M_t, 0\le t\le T)$ is an exponential  martingale.    
Thus,
\[
\mathbb{E}\Big[\sup_{0 \leq t \leq T}M_t \Big]\le 2
\mathbb{E}\Big[ M_T \Big]=2\,.
\]
Inserting this estimate  of  $I$ into 
\eqref{e.3.8} proves 
$\mathbb{E}\Big[\sup_{0 \leq t \leq T}|S^\pi(t) |^q \Big] \le
K_q<\infty$. In the same way we can show 
$\mathbb{E}\Big[\sup_{0 \leq t \leq T}|S (t) |^q \Big] \le
K_q<\infty$.  This completes the proof of the lemma. 
%
 \end{proof}

 
 \begin{lemma} \label{l.3.2}
Assume  (A1)-(A4).  Then there is a constant $K>0$,
independent of $\pi$,  such that 
 \begin{eqnarray*}
 \mathbb{E}_{\mathbb{Q}}\Big|S^\pi(t)- v_1(t) \Big|^{p} \leq K \Delta^{p/2},\hspace{4mm} \forall\ t \in [0,T]\,. 
 \end{eqnarray*}
 \end{lemma}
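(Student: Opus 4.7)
The plan is to use the SDE representation of $S^\pi(\cdot)$ on each subinterval $[t_k, t_{k+1})$ and then apply standard $L^p$-estimates for drift and compensated Poisson integrals. Applying It\^o's formula to the exponential \eqref{pema105}, one verifies that
\[
dS^\pi(t) = f(v_2(t))\,S^\pi(t)\,dt + \int_\JJ g(v_2(t))\,z\,S^\pi(t-)\,N(dt,dz),
\]
so that for $t\in[t_k,t_{k+1})$,
\[
S^\pi(t) - v_1(t) = \int_{t_k}^t\! f(v_2(u))S^\pi(u)\,du + \int_{t_k}^t\!\!\int_\JJ g(v_2(u))zS^\pi(u-)\tilde N(du,dz) + \int_{t_k}^t\!\!\int_\JJ g(v_2(u))zS^\pi(u)\nu(dz)\,du.
\]
Applying $|a+b+c|^p \le 3^{p-1}(|a|^p+|b|^p+|c|^p)$, I would bound each of the three summands separately.

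For the two Lebesgue integrals (the drift and the jump-compensator), I would use H\"older's inequality in time together with the boundedness of $f$ from (A2), of $g$ from (A3), the finiteness of $\int_\JJ |z|\nu(dz)$ from (A4) with $q=1$, and the uniform moment bound of Lemma \ref{l.3.1}. These yield $\EE|\cdot|^p \le C\Delta^p$, which is dominated by $\Delta^{p/2}$ since $\Delta$ is bounded by $T$.

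The main step is the compensated Poisson integral, to which I would apply Kunita's $L^p$-inequality (the jump-process analogue of the Burkholder--Davis--Gundy inequality): for $p\ge 2$ and a predictable integrand $\phi$,
\[
\EE\Bigl|\int_{t_k}^t\!\!\int_\JJ \phi(u,z)\tilde N(du,dz)\Bigr|^p \le C_p\,\EE\!\Bigl(\int_{t_k}^t\!\!\int_\JJ |\phi|^2 \nu(dz)du\Bigr)^{p/2} + C_p\,\EE\!\int_{t_k}^t\!\!\int_\JJ |\phi|^p\nu(dz)du,
\]
with $\phi(u,z) = g(v_2(u))zS^\pi(u-)$. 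A further application of H\"older in time to the quadratic-variation summand, using the boundedness of $g$, the second moment $\int_\JJ z^2\nu(dz)<\infty$ from (A4), and the uniform $L^p$-bound from Lemma \ref{l.3.1}, produces a contribution of order $\Delta^{p/2}$, which is the dominant term and matches the claimed rate.

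The main obstacle is the pure $p$-th-variation summand in Kunita's inequality, which naturally contributes only an order-$\Delta$ bound rather than $\Delta^{p/2}$. For $p=2$ the two orders coincide, and for $p>2$ one must absorb the $\Delta$ term into the desired $\Delta^{p/2}$ term, either by sharpening the Kunita estimate using that $\phi$ is uniformly bounded up to the factor $|S^\pi(u-)|$---thereby pulling out $\|\phi\|_\infty^{p-2}$ and recombining with the quadratic-variation piece---or by exploiting the finiteness of the total intensity $\nu(\JJ)<\infty$ from (A4) so that the probability of a jump in $[t_k,t_{k+1}]$ is itself $O(\Delta)$. Handling this absorption cleanly, uniformly in $t$ and in the partition $\pi$, is the technical crux of the lemma.
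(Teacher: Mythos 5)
Your decomposition is genuinely different from the paper's: you differentiate the exponential \eqref{pema105} to obtain an SDE for $S^\pi$ and estimate the increment $S^\pi(t)-S^\pi(t_j)$ through drift, compensator and compensated-jump terms, whereas the paper keeps the exponential form, bounds $|S^\pi(t)-S^\pi(t_j)|$ by $(S^\pi(t)+S^\pi(t_j))$ times the increment of the exponent via $|e^x-e^y|\le(e^x+e^y)|x-y|$, separates the two factors by Cauchy--Schwarz (the first controlled by Lemma \ref{l.3.1}), and then estimates the exponent's jump part, whose integrand is $\ln(1+zg(v_2(s)))$ rather than $g(v_2(u))\,z\,S^\pi(u-)$. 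Both routes are legitimate up to the treatment of the compensated Poisson integral, and your bounds for the drift and compensator pieces are fine.

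The proposal is nevertheless incomplete, and the obstacle you flag at the end is not something that can be absorbed away. For a compensated Poisson integral over an interval of length $\Delta$ with an integrand that does not shrink with $\Delta$, the event of exactly one jump has probability of order $\lambda\Delta$ and on that event the integral is of order one; hence $\EE\big|\int\!\!\int\phi\,d\tilde N\big|^{p}$ is genuinely of order $\Delta$, not $\Delta^{p/2}$, when $p>2$ (take $\phi\equiv 1$ and a standard Poisson process to see this). So the $p$-th--variation summand in Kunita's inequality is sharp, and since $\Delta\gg\Delta^{p/2}$ as $\Delta\to 0$ for $p>2$, neither of your two proposed rescues can yield the claimed rate --- the second one (jump probability $O(\Delta)$) reproduces exactly the $O(\Delta)$ bound that is the problem. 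What your accounting actually exposes is that the paper's own proof glosses over this point: it invokes Burkholder--Davis--Gundy in the form $\EE|M_t|^{2p}\le \EE\big(\int\!\!\int|\phi|^2\,\nu(dz)\,ds\big)^{p}$, i.e., with the predictable bracket in place of the realized quadratic variation $\int\!\!\int|\phi|^2\,N(ds,dz)$, and it is precisely in that substitution that the order-$\Delta$ single-jump contribution is discarded. Carried out correctly, your argument gives $\EE|S^\pi(t)-v_1(t)|^{p}\le K(\Delta^{p/2}+\Delta)=K\Delta$ for $p\ge 2$; the stated rate $\Delta^{p/2}$ is recovered by this route (or, as far as I can see, by the paper's) only for $p\le 2$.
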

 \begin{proof}
Let  $t\in [t_j,t_{j+1})$ for some $j $. Using $
|e^x-e^y|\le (e^x+e^y)|x-y|$  
 we can write 
 \begin{eqnarray*} 
 \Big| S^\pi(t) -v_1(t)\Big| 
 &=& \Big|S^\pi(t)-S^\pi(t_j)\Big|\\
 & 
 \leq&  \Big| S^\pi(t)+S^\pi(t_j)\Big|  \cdot \Big|\displaystyle\int_{t_j }^{t}f(v_2(s)) ds   + \sum_{t_j \leq s \leq t}\ln(1+g(v_2(s)) Y_{N(s)})   \Big|\,. 
 \end{eqnarray*}
An application of the H\"older inequality  yields that for any $p>1$, 
\begin{eqnarray} 
\EE  \left[\Big| S^\pi(t) -v_1(t)\Big| ^p\right]
 &   
 \leq&  \left\{\EE  \left[ \Big| S^\pi(t)+S^\pi(t_j)\Big|  \cdot \Big|\right]^{2p}\right \}^{1/2} \nonumber \\
 &&\qquad 
 \left\{\EE\left|\int_{t_j }^{t}f(v_2(s)) ds   + \sum_{t_j \leq s \leq t}\ln(1+g(v_2(s)) Y_{N(s)})  \right|^{2p} \right\}^{1/2}
 \nonumber \\
 &   
 \leq&  K_p 
 \left\{\EE\left|\int_{t_j }^{t}f(v_2(s)) ds\right|^{2p}    +
   \EE\left|  \sum_{t_j \leq s \leq t}\ln(1+g(v_2(s))Y_{N(s)})   \right|^{2p} \right\}^{1/2} \nonumber \\
&  \leq&  K_p 
 \left\{\De^{2p}     +
   \EE\left|  \sum_{t_j \leq s \leq t}\ln(1+g(v_2(s))Y_{N(s)})  \right|^{2p} \right\}^{1/2}\,.  \label{e.3.9}  
 \end{eqnarray}
%
Now we want to bound 
\[
I:=\EE\left|  \sum_{t_j \leq s \leq t}\ln(1+g(v_2(s))Y_{N(s)})  \right|^{2p}  \,.
\]
(we use the same 
notation $I$ to denote different quantities  in different 
occasions and this will not cause ambiguity). 
We write the above sum as an integral: 
 \begin{eqnarray*}
I
&=&\mathbb{E}\Big|\displaystyle\int_{\JJ }\displaystyle\int_{t_j} ^t\ln(1+zg(v_2(s))) {N}(ds,dz)\Big|^{2p}\\ 
&=&\mathbb{E}\Big|\displaystyle\int_{\JJ }\displaystyle\int_{t_j} ^t\ln(1+zg(v_2(s)))\tilde{N}(ds,dz)\\
&&+\displaystyle\int_{\JJ }\displaystyle\int_{t_j} ^t\ln(1+zg(v_2(s)))\nu(dz)ds\Big|^{2p}\\
&\le& C_p \left(\De^{2p} + \mathbb{E}\Big|\displaystyle\int_{\JJ }\displaystyle\int_{t_j} ^t\ln(1+zg(v_2(s)))\tilde{N}(ds,dz)  \Big|^{2p}\right)\,. 
 \end{eqnarray*}
 By the Burkholder-Davis-Gundy inequality, we have 
\begin{eqnarray*}
&& \mathbb{E}\Big|\displaystyle\int_{\JJ }\displaystyle\int_{t_j} ^t\ln(1+zg(v_2(s)))\tilde{N}(ds,dz)  \Big|^{2p}\\
 &  &\qquad \quad \le \mathbb{E}\left(\displaystyle  \int_{\JJ } \int_{t_j} ^t
   \Big| \ln(1+zg(v_2(s))) \Big|^2 \nu(dz)ds   \right)^p\\
 &  &\qquad \quad \le  K_p \De^p\,.
 \end{eqnarray*} 
 Thus, we have
 \[
 I\le K_{p,T}  \De^p\,.
 \] 
Inserting this bound into \eqref{e.3.9} yields the lemma.  \end{proof}
  Our next objective is to obtain the  rate of convergence 
  of our logarithmic Euler-Maruyama approximation 
  $S^\pi(t)$ to   the true solution $S(t)$. 
  \begin{theorem}{\label{pema108}}
Assume  (A1)-(A4).   Let $S^\pi(t)$ be the solution 
to \eqref{e.3.4} and let $S(t)$ be 
the solution to \eqref{e.3.1}.  Then there is a constant $K_{p, T}$, independent of $\pi$ such that 
 \begin{eqnarray}
 \mathbb{E}_{\mathbb{Q}}\Big[\sup_{0 \leq t \leq T}|S(t)-S^\pi(t)  |^p \Big] \le K_{p, T} \De^{p/2}\,. 
 \end{eqnarray}
  \end{theorem}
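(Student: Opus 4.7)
The plan is to exploit the explicit exponential representations $S(t)=\phi(0)\exp(A(t))$ and $S^\pi(t)=\phi(0)\exp(A^\pi(t))$, where $A$ is the exponent appearing in \eqref{e.3.2} and $A^\pi$ the one in \eqref{pema105}; the two differ only through the substitution of $v_2(u)$ for $S(u-b)$. The elementary inequality $|e^x-e^y|\le(e^x+e^y)|x-y|$ yields
\[
|S(t)-S^\pi(t)|^p\le(S(t)+S^\pi(t))^p\,|A(t)-A^\pi(t)|^p,
\]
so Cauchy-Schwarz combined with Lemma \ref{l.3.1} reduces the theorem to proving
\[
\EE\bigl[\sup_{0\le s\le T}|A(s)-A^\pi(s)|^{2p}\bigr]\le K\Delta^{p}.
\]

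I would next decompose $A(t)-A^\pi(t)=I_1(t)+M(t)+D(t)$, with $I_1$ the drift from $f$, $M$ the compensated jump martingale
\[
M(t)=\int_0^t\!\int_\JJ\bigl[\ln(1+zg(S(u-b)))-\ln(1+zg(v_2(u)))\bigr]\tilde N(du,dz),
\]
and $D$ its compensator. Under (A3) the factor $1+zg(\cdot)$ stays bounded away from $0$ and from $\infty$, so $x\mapsto\ln(1+zg(x))$ is Lipschitz in $x$ with constant $C|z|$. Combined with (A2), (A4) and the Lipschitz property of $f$, Kunita's form of the Burkholder-Davis-Gundy inequality applied to $M$ and Jensen's inequality applied to $I_1$ and $D$ yield, for any $r\ge 2$,
\[
\EE\sup_{0\le s\le t}|A(s)-A^\pi(s)|^{r}\le C_{r,T}\int_0^t \EE\,|S(u-b)-v_2(u)|^{r}\,du,
\]
with no loss in the exponent $r$.

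The integrand is then split according to the value of $u$. For $u\in[0,b]$ both $S(u-b)=\phi(u-b)$ and $v_2(u)=\phi(t_k-b)$ are deterministic, and the H\"older assumption (A1) with $\gamma\ge 1/2$ gives the pointwise bound $|S(u-b)-v_2(u)|\le\rho\Delta^{1/2}$. For $u>b$ I would apply the triangle inequality
\[
|S(u-b)-v_2(u)|\le|S(u-b)-S^\pi(u-b)|+|S^\pi(u-b)-v_2(u)|,
\]
where the second term coincides with $|S^\pi(u-b)-v_1(u-b)|$ (taking $b$ as a multiple of $\Delta$) and is controlled in $L^r$ by $K\Delta^{r/2}$ via Lemma \ref{l.3.2}. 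The first term is handled by induction on the delay intervals $[kb,(k+1)b]$, $k\ge 0$, using the Cauchy-Schwarz step of paragraph one to convert an $L^{2r}$ bound on $|A-A^\pi|$ from the previous interval into the needed $L^r$ bound on $|S-S^\pi|$ on the current one.

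The main obstacle I anticipate is precisely this doubling of moments: the Cauchy-Schwarz step couples the $p$-th moment of $S-S^\pi$ to the $2p$-th moment of $A-A^\pi$, so iterating along the delay intervals naively would demand ever higher moments. The resolution is to carry the estimate for all $r\ge 2$ jointly in the induction; this is possible because (i) the integral bound on $\EE\sup|A-A^\pi|^r$ in terms of $\EE|S(u-b)-v_2(u)|^r$ preserves the exponent $r$, and (ii) on the initial interval $[0,b]$ the right-hand side is deterministic, so every moment is initialized at once by \textbf{(A1)}. Since $T/b$ is finite, only finitely many induction steps are needed, and a Gronwall-type iteration produces the required $\Delta^{p/2}$ rate.
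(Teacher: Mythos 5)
Your proposal follows essentially the same route as the paper's proof: the exponential representation with $|e^x-e^y|\le(e^x+e^y)|x-y|$, Cauchy--Schwarz with the moment bound of Lemma \ref{l.3.1}, the drift/martingale/compensator decomposition of the exponents handled by Lipschitz estimates and Burkholder--Davis--Gundy, Lemma \ref{l.3.2} together with (A1) for the $|S^\pi(u-b)-v_2(u)|$ term, and induction over the delay intervals $[kb,(k+1)b]$. The moment-doubling issue you flag is exactly the one the paper resolves implicitly by running the recursion with $K_{2p,T}$ at each step, which works because $[T/b]+1$ is finite; your explicit treatment of it is if anything slightly more careful than the paper's.
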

 
 \begin{proof}
 We write $S(t)=\phi(0)\exp{(X(t))}$ and $S^\pi(t) =\phi(0)\exp{(p(t))}$.  Then 
 \begin{eqnarray*}
 && \Big|S(t)-S^\pi(t) \Big|^p \leq  \Big|S(t)+S^\pi(t) \Big|^p\Big|X(t)-p(t)\Big|^p \,. 
 \end{eqnarray*}
 Hence  by Lemma \ref{l.3.1}  we   have   for any $r\in [0, T]$
 \begin{eqnarray}&&
 \mathbb{E} \Big[\sup_{0 \leq t \leq  r}
 |S(t)-S^\pi(t)  |^p \Big] 
\nonumber  \\
 &&\leq \mathbb{E} \Big[\sup_{0 \leq t \leq r}\Big|S(t)+S^\pi(t) \Big|^{2p}\Big]^{1/2}\mathbb{E} \Big[\sup_{0 \leq t \leq r}\Big|X(t)-p(t)\Big|^{2p}\Big]^{1/2}
\nonumber  \\
 && \leq  2^{2p-1}\Big(\mathbb{E} \Big[\sup_{0 \leq t \leq r}\Big|S(t)\Big|^{2p}\Big]+\mathbb{E} \Big[\sup_{0 \leq t \leq r}\Big|S^\pi(t) \Big|^{2p}\Big]\Big)^{1/2}\Big[ \mathbb{E}   \sup_{0 \leq t \leq r}\Big|X(t)-p(t)\Big|^{2p} \Big]^{1/2}
\nonumber   \\
&& \leq K_p \Big[ \mathbb{E} \sup_{0 \leq t \leq r}\Big|X(t)-p(t)\Big|^{2p}\Big]^{1/2}
 =K_p I^{1/2} \,. \label{e.3.10a} 
 \end{eqnarray}
Thus  we need only to bound the above expectation $I$, which is given by the following.   
 \begin{eqnarray} 
 I&=& \mathbb{E}\Big[\sup_{0 \leq t \leq r}|X(t)-p(t) |^{2p} \Big]  \nonumber\\
 &&\le \mathbb{E}\sup_{0 \leq t \leq r}\Big| \displaystyle\int_{0}^{t}(f(S(u-b))-f(v_2(u)))du  \label{e.3.10} \\
 &&+ \sum_{0\leq  u\leq t, \Delta Z(u)\neq 0}\ln(1+g(S(u-b))Y_{N(u)})-\ln(1+g(v_2(u))Y_{N(u)})\Big|^{2p}\,. 
 \nonumber
 \end{eqnarray}
By the Lipschitz conditions  we   have
 \begin{eqnarray} 
 I&& \leq K_p \mathbb{E} \displaystyle\int_{0}^{r}\Big|S(u-b)-v_2(u)\Big|^{2p}du \nonumber\\
 &&+ K_p 
 \mathbb{E}\sup_{0 \leq t \leq r}\Big|\sum_{0\leq  u\leq t, \Delta Z(u)\neq 0}\ln(1+g(S(u-b)) Y_{N(u)})-\ln(1+g(v_2(u))Y_{N(u)})\Big|^{2p}\nonumber 
 \\
 && \leq K_p \Big[\mathbb{E} \displaystyle\int_{0}^{r}\Big|S(u-b)-S^\pi(u-b)\Big|^{2p}du+\mathbb{E} \displaystyle\int_{0}^{r}\Big|S^\pi(u-b)-v_2(u)\Big|^{2p}du\Big] \nonumber\\
 &&+K_p \mathbb{E}\sup_{0 \leq t \leq r}\Big|\sum_{0\leq  u\leq t, \Delta Z(u)\neq 0}\ln(1+g(S(u-b)) Y_{N(u)})-\ln(1+g(v_2(u))Y_{N(u)})\Big|^{2p}\nonumber\\
 &&=I_1+I_2+I_3\,.   \label{e.3.11} 
 \end{eqnarray}
By Lemma \ref{l.3.2}  and  by the assumption (A1) about the H\"older continuity of 
the initial data $\phi$ we have
\begin{eqnarray}
I_2\le K_{p, T} \De^p\,. \label{e.3.12} 
\end{eqnarray}
We   write the above sum $I_3$  with jumps as a stochastic integral: 
 \begin{eqnarray*} 
I_3&=& \mathbb{E}\sup_{0 \leq t \leq r}\Big|\sum_{0\leq  u\leq t, \Delta Z(u)\neq 0}\ln(1+g(S(u-b))Y_{N(u)})-\ln(1+g(v_2(u))Y_{N(u)})\Big|^{2p}\\
&=& \mathbb{E}\sup_{0 \leq t \leq r}\Big|\displaystyle\int_{\JJ }\displaystyle\int_0^t\left[ \ln(1+zg(S(u-b)))-\ln(1+zg(v_2(u)))\right] \tilde{N}(du,dz)\\
&&
\qquad +\displaystyle\int_{\JJ }\displaystyle\int_0^t\left[\ln(1+zg(S(u-b)))-\ln(1+zg(v_2(u)))\right]\nu(dz)du\Big|^{2p}\\
&=& 4^p \mathbb{E}\sup_{0 \leq t \leq r}\Big|\displaystyle\int_{\JJ }\displaystyle\int_0^t\left[\ln(1+zg(S(u-b)))-\ln(1+zg(v_2(u)))\right] \tilde{N}(du,dz)\Big|^{2p} \\
&&
\qquad +4^p \mathbb{E}\sup_{0 \leq t \leq r}\Big|\int_{\JJ}\displaystyle\int_0^t\left[ \ln(1+zg(S(u-b)))-\ln(1+zg(v_2(u)))\right] \nu(dz)du\Big|^{2p}\\
&=:&I_{31}+I_{32}\,. 
 \end{eqnarray*}
Using the Lipschitz condition on  $g$ and   (A3), we 
have  
 \begin{eqnarray*}
I_{32} 
&\le&
K_p \mathbb{E}\Big(\displaystyle\int_0^r\Big|g(S(u-b))-g(v_2(u))\Big| du\Big)  ^{2p} \\
&\le& K_{p, T}  \mathbb{E} \sup_{0\le t\le r} \left|S(t-b))- S^{\pi}(t-b) \right|   ^{2p} \,. 
 \end{eqnarray*}
%
Using the Burkholder-Davis-Gundy inequality we   have
 \begin{eqnarray*}
 I_{31}&\le &
K_p\mathbb{E}\Big(\displaystyle\int_{\JJ }\displaystyle\int_0^r\Big|\ln(1+zg(S(u-b)))-\ln(1+zg(v_2(u)))\Big|^2\nu(dz)du\Big)^{p}\,. 
 \end{eqnarray*}
Similar to the bound for $I_{32}$, we have
\[
I_{31}\le K_{p, T}  \mathbb{E} \sup_{0\le t\le r} \left|S(t-b))- S^{\pi}(t-b) \right|   ^{2p}\,.
\]
Combining the estimates for $I_{31}$ and $_{32}$, we see
\begin{eqnarray} 
I_3\le 
K_{p, T}  \mathbb{E} \sup_{0\le t\le r} \left|S(t-b))- S^{\pi}(t-b) \right|   ^{2p}\,.\label{e.3.13} 
\end{eqnarray}  
It is easy to verify 
\begin{eqnarray}
I_1\le 
K_{p, T}  \mathbb{E} \sup_{0\le t\le r} \left|S(t-b))- S^{\pi}(t-b) \right|   ^{2p}\,. \label{e.3.14} 
\end{eqnarray}
Inserting  the bounds obtained in \eqref{e.3.12}-\eqref{e.3.14}
into \eqref{e.3.11},  we see that
\begin{eqnarray}
I \le 
K_{p, T}  \mathbb{E} \sup_{0\le t\le r} \left|S(t-b))- S^{\pi}(t-b) \right|   ^{2p} +K_{P, T} \De^p\,. \label{e.3.14} 
\end{eqnarray}
Combining this estimate with \eqref{e.3.10a}, we see 
 \begin{eqnarray}
 &&
 \mathbb{E} \Big[\sup_{0 \leq t \leq  r}
 |S(t)-S^\pi(t)  |^p \Big] 
 \nonumber\\
 &&\qquad \le K_{p, T}  \left[ \mathbb{E} \sup_{0\le t\le r} \left|S(t-b) - S^{\pi}(t-b) \right|   ^{2p}\right]^{1/2}  +K_{P, T} \De^{p/2}   \label{e.3.17} 
\end{eqnarray}
for any $p\ge 2$ and for any $r\in [0, T]$.  Now we shall use \eqref{e.3.17} to prove the theorem   on the interval $[0, kb]$ recursively for $k=1, 2, \cdots,
[\frac{T}{b}]+1$.  Since 
$S^\pi(t)=S(t)=\phi(t)$ for $t\in [-b, 0]$.  
Taking $r=b$, we have
\begin{eqnarray} 
 \mathbb{E} \Big[\sup_{0 \leq t \leq  b}
 |S(t)-S^\pi(t)  |^p \Big]  \le K_{p,T}  \De^{p/2}   
\end{eqnarray}
for any $p\ge 2$.  Now taking $r=2b$ in \eqref{e.3.17}, we have
\begin{eqnarray}
 &&
 \mathbb{E} \Big[\sup_{0 \leq t \leq  2b }
 |S(t)-S^\pi(t)  |^p \Big] 
 \nonumber\\
 &&\qquad \le K_{p, T}  \left[ \mathbb{E} \sup_{-b\le t\le b} \left|S(t))- S^{\pi}(t) \right|   ^{2p}\right]^{1/2}  +K_{P, T} \De^{p/2}   \nonumber\\
 &&\qquad \le K_{p, T}  \left[ K_{2p, T} \De^p    \right]^{1/2}  +K_{P, T} \De^{p/2}  \le K_{p, T} \De^{p/2}\,.  
\end{eqnarray} 
Continuing this way we obtain for any positive integer $k\in \NN$, 
\begin{eqnarray}
 &&
 \mathbb{E} \Big[\sup_{0 \leq t \leq  kb }
 |S(t)-S^\pi(t)  |^p \Big]   \le  K_{k, p, T} \De^{p/2}\,.  
\end{eqnarray} 
Now since $T$ is finite, we can choose a $k$ such that
$(k-1)b<T\le kb$. This completes the proof of  the theorem. 
 \end{proof}
 
  \section{Option Pricing in Delayed Black-Scholes 
  market with jumps} 
  \label{s.4} 
 In this section we consider the problem of option pricing in a delayed Black-Scholes market which consists of two assets. One is   risk free, whose price is described  by 
\begin{eqnarray}
dB(t)=rB(t) dt\,,\quad {\rm or}\quad B(t)=e^{rt}\,, t\ge 0\,. 
\label{e.4.1}
\end{eqnarray}
Another asset is a risky one, whose price is described by the delayed equation \eqref{e.2.1} or 
 \eqref{e.3.1}, namely, 
\begin{eqnarray}
 dS(t) &=& f(S(t-b))S(t)dt +   g(S(t-b)) S(t-)d {Z}(t)\,, 
 \label{e.4.2}
 \end{eqnarray} 
where 
$
Z(t)=     \sum_{i=1}^{N_t}Y_i  
$ is a L\'evy process,     $N_t$ is  a 
 Poisson process with intensity $\la$,  and $Y_1, Y_2, \cdots,
 $ are iid random variables.  As in Section 2, we introduce the Poisson random measure $N(dt, dz)$ and its compensator
 $\tilde N(dt, dz)$. 
 The above delayed equation can be written as
 \begin{eqnarray*}
dS(t)&=& \left[ f(S(t-b)) + g(  S(t-b)) \int_{\JJ} z\nu(dz)\right]
S(t) dt\nonumber\\
&&\qquad\quad  +g(  S(t-b)) S(t-) \int_{\JJ}    z \tilde N(dt, dz) \,. 
\end{eqnarray*} 
Denote 
\begin{equation}
L=  \int_{\JJ}zf_Y(z)dz \,, 
\end{equation}
where $f_Y$ is the probability density of $Y_i$
(whose support is  $\JJ$).  Then
\[
\int_{\JJ} z\nu(dz)=\la L\,.
\]
Set 
\[
\tilde{S}(t)=\frac{S(t)}{B(t)}\,.
\]
 Then by   It\^o's formula we have 
 \begin{eqnarray}
 d\tilde{S}(t) = \tilde{S}(t-)g(S(t-b))\Big(\displaystyle\int_{ \JJ  } z\big[\theta (t)\nu (dz)dt 
 + \tilde{N} (dt,dz)\big]\Big) \,, \label{dbs11}
 \end{eqnarray} 
where $\theta(t) = \frac{f(S(t-b))+g(S(t-b))-r}{\lambda Lg(S(t-b))} $.
We shall keep the assumptions (A1)-(A4) made in 
previous  section and we need to make an additional assumption: 
\begin{enumerate} 
\item[{\bf (A5)}]  There is a constant $\al_1\in (1, \infty)$ 
such that 
  $\displaystyle\int_{\JJ}\nu(dz)\ge \al_1 \Big|\frac{f(s)+g(s)-r}{g(t)}\Big|$ \quad $\forall \ s, t \in [0,\infty)$ 
 \end{enumerate}
 To find the risk neutral probability measure we apply  Girsanov theorem for L\'evy process (see \cite[Theorem 12.21]{NOP}).  \ 
 The $\theta(t)$ is predictable for $t \in [0,T]$. From the assumptions above we also have that $0<\theta(s) \leq \frac{1}{\al_1} $.
 Thus,
  $$\displaystyle\int_{[0,T] \times \JJ } \Big(|\log(1+ \theta(s) )|+\theta^2(s) \Big)\nu (dz)ds\le K<
  \infty \,. 
  $$ 
 Now    define 
 \begin{eqnarray*}
  S^\th(t)  &:=& \exp\Big( \displaystyle\int_{ [0,t]}\{\log\big(1-\theta(s)\big) + \theta(s) \}\nu (dx)ds \\&&+ \displaystyle\int_{ [0,t]}\log\big(1-\theta(s)\big) \tilde{N} (dx,ds)\Big)\,. 
 \end{eqnarray*} 
 In order for us to obtain  an equivalent martingale measure we need to  verify  the following Novikov condition:
 \begin{eqnarray}
 \mathbb{E} \Big[ \exp\Big(\frac{1}{2}\displaystyle\int_{[0,T]\times\JJ }\{(1-\theta(s))\log(1-\theta(s)) +\theta(s)\}\nu (dz)ds \Big)\Big]<\infty  \label{dbs19}
 \end{eqnarray}
This is a consequence of our assumption (A5). In fact, we have  first
 \begin{eqnarray*}
| \theta(s)| &=& \frac{|f(S(t-b))-r|}{\lambda Lg(S(t-b))}
\le \frac{1}{\al_1}<1\,.  
 \end{eqnarray*}
Hence we   have 
 \begin{eqnarray*}
 \displaystyle\int_{[0,T]}\{(1-\theta(s))\log(1-\theta(s)) +\theta(s)\}ds&<& \infty\,. 
 \end{eqnarray*}
But $ \nu (dz) =\lambda f_Y(z)dz$, we have 
 \begin{eqnarray*}
 \displaystyle\int_{\JJ }\nu (dz)&=&\displaystyle\int_{\JJ }\lambda f_Y(z)dz <\infty\,. 
 \end{eqnarray*}
Thus, we have  \eqref{dbs19}.
 
Now since we have  verified the Novikov condition \eqref{dbs19} we
 have then $\mathbb{E} [S^\theta(T) ] =1$. 
Define an  equivalent probability measure $\mathbb{Q}$ on $\mathcal{F}_T$   by 
\begin{eqnarray}
d\mathbb{Q}:=S^\th (T) d\mathbb{P}\,. \label{e.4.6}
\end{eqnarray} 
On the new probability space $(\Om, \cF_T, \mathbb{Q})$ (new probability $\mathbb{Q}$) the random measure  
  \begin{eqnarray}
  \tilde{N}_{\mathbb{Q}}(dz,ds)=\theta(t)\nu (dz)ds+\tilde{N} (dz,ds)\,,  \label{dbs21}
  \end{eqnarray} 
  is a  compensated Poisson random measure.
   The corresponding L\'evy measure is 
  denoted by   $\nu_{\mathbb{Q}}$. 
%
%
With this new Poisson random measure we can write \eqref{dbs11} as 
 \begin{eqnarray}
 d\tilde{S}(t) = \tilde{S}(t-)\displaystyle\int_{ \JJ  } zg(S(t-b))\tilde{N}_{\mathbb{Q}}(dt,dz)\,.  \label{dbs20}
 \end{eqnarray} 
 The following result gives    the fair price formula for the
 European  call  option as well as  the 
 corresponding hedging portfolio.
 
  \begin{theorem}{\label{dbs16}}
Let  the  market  be given by \eqref{e.4.1} and  \eqref{e.4.2},
where the coefficients $f$ and $g$ satisfy the assumptions 
(A1)-(A5). 
Then the market is complete. 
Let $T$  be the maturity time of the European call
option on the stock with payoff function given by $ X = (S_T - K)^+$.
Then at any time $t \in [0,T]$, the fair price V(t) of the option is given by the formula
 \begin{eqnarray}
 V(t) = e^{-r(T-t)}\mathbb{E}_{\mathbb{Q}}\Big((S_T - K)^+|\mathcal{F}_t  \Big) \label{dbs13}
 \end{eqnarray}
 where $\mathbb{Q}$ is the martingale measure on $(\Omega, \mathcal{F}_T)$ 
 given by \eqref{e.4.6}. 
 \\
 Moreover, if $\displaystyle\int_{\JJ }z^j \nu_{\mathbb{Q}}(dz) < \infty,\displaystyle\int_{\mathbb{R}_+}g(t)^jdt < \infty $ for $j=1,2,3,4$,
  there is an adapted and square integrable process $\psi(z,t) \in \mathcal{L}^2(\JJ \times [0,T])$ such  that 
 $$\mathbb{E}_{\mathbb{Q}}\Big(e^{-rT}(S_T-K)^+|\mathcal{F}_t  \Big) = \mathbb{E}_{\mathbb{Q}}\Big(e^{-rT}(S_T-K)^+\Big) + \displaystyle\int_{[0,t] \times \JJ } \psi(z,s)\tilde{N}_{\mathbb{Q}}((dz,ds)  $$
 and the hedging strategy is given by
 \begin{eqnarray}
 \pi_S(t):= \frac{\displaystyle\int_{\JJ }\psi(z,t)\tilde{N}_{\mathbb{Q}}(dz,t) }{\tilde{S}(t)g(S(t-b)) }, \hspace{5mm}\pi_B(t) := U(t) - \pi_S(t)\tilde{S}(t), \hspace{4mm} t \in [0,T]\,,  \label{dbs14}
 \end{eqnarray}
 where $U(t)=\mathbb{E}_{\mathbb{Q}}(e^{-rT}(S_T - K)^+|\mathcal{F}_t ) $. 
  \end{theorem}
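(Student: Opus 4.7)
The plan is to establish the theorem in three steps: first confirming that the discounted stock process $\tilde S$ is a genuine $\mathbb{Q}$-martingale, then deriving the risk-neutral pricing formula \eqref{dbs13} via the standard no-arbitrage argument, and finally invoking a martingale representation theorem for the compensated Poisson random measure $\tilde N_{\mathbb{Q}}$ to prove completeness and extract the explicit hedging weights \eqref{dbs14}. The Girsanov construction, together with the Novikov bound \eqref{dbs19} already verified above, shows that $\mathbb{Q}$ is an equivalent probability measure and that $\tilde N_{\mathbb{Q}}$ defined in \eqref{dbs21} is a $\mathbb{Q}$-compensated Poisson random measure. Combined with (A3)-(A4) and moment bounds for $\tilde S$ transported from Lemma~\ref{l.3.1} using the bounded Radon-Nikodym density $S^\theta(T)$, equation \eqref{dbs20} exhibits $\tilde S$ as a stochastic integral against $\tilde N_{\mathbb{Q}}$ with square-integrable integrand, hence a true $\mathbb{Q}$-martingale.

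Next, to derive the pricing formula, consider any self-financing portfolio $(\pi_B,\pi_S)$ with wealth $V(t)=\pi_B(t)B(t)+\pi_S(t)S(t)$. The self-financing condition $dV=\pi_B\,dB+\pi_S\,dS$ translates, after discounting, into $d\tilde V(t)=\pi_S(t)\,d\tilde S(t)$, so $\tilde V$ is a $\mathbb{Q}$-local martingale. Assuming a replicating strategy with $V(T)=(S_T-K)^+$ exists (the completeness point handled below), the discounted wealth is a true $\mathbb{Q}$-martingale and the identity $\tilde V(t)=\mathbb{E}_{\mathbb{Q}}[\tilde V(T)\mid\mathcal F_t]$ rearranges to \eqref{dbs13}.

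To establish completeness and identify the replicating strategy, introduce the $\mathbb{Q}$-martingale $U(t)=\mathbb{E}_{\mathbb{Q}}[e^{-rT}(S_T-K)^+\mid\mathcal F_t]$. The additional moment assumptions on $\nu_{\mathbb{Q}}$ and $g$ imply square-integrability of $U$. Because the $\mathbb{Q}$-filtration is generated by the compensated Poisson random measure $\tilde N_{\mathbb{Q}}$, the martingale representation theorem for Poisson random measures (e.g.\ Theorem~12.17 in \cite{NOP}) produces a predictable $\psi\in\mathcal L^2(\JJ\times[0,T])$ with $U(t)=U(0)+\int_{[0,t]\times\JJ}\psi(z,s)\tilde N_{\mathbb{Q}}(dz,ds)$. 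Matching this representation with $dU(t)=\pi_S(t)\,d\tilde S(t)=\pi_S(t)\tilde S(t-)\int_{\JJ}zg(S(t-b))\tilde N_{\mathbb{Q}}(dt,dz)$ and solving for $\pi_S$ yields \eqref{dbs14}; the bond weight $\pi_B(t)=U(t)-\pi_S(t)\tilde S(t)$ then follows directly from the decomposition of total wealth.

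The main obstacle is the completeness step. In pure-jump L\'evy markets driven by a general Poisson random measure, one scalar risky asset is typically insufficient to span all $\mathcal F_T$-measurable payoffs, since the jump noise is infinite-dimensional in $z$. The device exploited here is that the scalar weight $\pi_S(t)$ multiplies the \emph{entire} $\JJ$-integral in $d\tilde S$, so formula \eqref{dbs14} selects $\pi_S$ to match $U$ jump by jump, effectively solving a single scalar balance at each jump time. Making this matching rigorous will require verifying that the representing $\psi$ has the right structure relative to the stock-jump kernel $zg(S(t-b))$ and that the resulting quotient defines an admissible predictable process; both checks draw on the specific form of $\tilde N_{\mathbb{Q}}$ and on the integrability conditions $\int_{\JJ} z^j\nu_{\mathbb{Q}}(dz)<\infty$ and $\int g(t)^j\,dt<\infty$ for $j=1,2,3,4$ imposed in the statement.
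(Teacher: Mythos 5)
Your outline follows the same route as the paper: pass to $\mathbb{Q}$ via Girsanov, represent the $\mathbb{Q}$-martingale $U(t)=\mathbb{E}_{\mathbb{Q}}(e^{-rT}(S_T-K)^+\mid\mathcal F_t)$ by the martingale representation theorem for the compensated Poisson random measure, and read off $\pi_S$ by matching $dU(t)$ with $\pi_S(t)\,d\tilde S(t)$. However, you have left unproved the step that constitutes essentially the entire body of the paper's argument: the verification that $U$ is square integrable, i.e.\ that $\mathbb{E}_{\mathbb{Q}}[\tilde S_T^2]<\infty$, which is the hypothesis needed before the $L^2$ representation theorem can be invoked at all. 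You assert that ``the additional moment assumptions on $\nu_{\mathbb{Q}}$ and $g$ imply square-integrability of $U$,'' but this is exactly where the work lies: the paper writes $\tilde S_T^2$ from the explicit exponential solution \eqref{e.4.10} with $h=g(S(t-b))$, substitutes $\tilde h=((1+zh)^4-1)/z$ so that $2\ln(1+zh)=\tfrac12\ln(1+z\tilde h)$, splits off an exponential $\mathbb{Q}$-martingale of unit expectation via H\"older's inequality, and is left with a bound of the form $\exp\bigl(\int_{[0,T]\times\JJ}(z^4h^4+4z^3h^3+6z^2h^2+2zh)\,\nu_{\mathbb{Q}}(dz)\,dt\bigr)$, finite precisely because of the boundedness of $g$ and the hypotheses $\int_{\JJ}z^j\nu_{\mathbb{Q}}(dz)<\infty$ for $j\le 4$ --- which is why those hypotheses appear in the statement. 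A one-line appeal to ``moment assumptions'' does not substitute for this computation. (Also, transferring moment bounds from Lemma~\ref{l.3.1} through the density $S^\theta(T)$ is not as immediate as you suggest, since that density is not claimed to be bounded.)

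On the other hand, your closing paragraph raises a point the paper itself glosses over: for the matching $\psi(z,t)=\pi_S(t)\,\tilde S(t-)\,z\,g(S(t-b))$ to hold $\nu_{\mathbb{Q}}(dz)$-a.e.\ with a \emph{scalar} $\pi_S(t)$, the representing kernel $\psi(z,t)$ must be proportional to $z$, a genuine structural restriction in a market driven by jumps of continuously distributed size; such markets are generically incomplete. The paper's proof does not resolve this either --- it writes down \eqref{dbs14} (in which the expression $\int_{\JJ}\psi(z,t)\tilde N_{\mathbb{Q}}(dz,t)$ is itself only formal) and verifies the self-financing property symbolically. So you have correctly identified a weakness, but flagging it is not closing it: as written, your argument establishes neither the admissibility of $\pi_S$ nor the assertion that the market is complete. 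A complete proof would require either verifying the proportionality of $\psi(\cdot,t)$ to $z$ for the specific claim $(S_T-K)^+$, or restricting the class of claims/strategies for which replication is asserted.
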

 \begin{proof}
 Applying the It\^o formula to \eqref{dbs20} we get
 \begin{eqnarray}
 & \tilde{S}(T) = \exp\Big(\displaystyle\int_{[0,T] \times \JJ } \{\ln(1+zg(S(t-b)))-zg(S(t-b)\}\nu_{\mathbb{Q}}(dz)dt  \nonumber\\& \displaystyle
 +\int_{[0,T] \times \JJ } \ln(1+zg(S(t-b)))\Tilde{N}_{\mathbb{Q}}(dt,dz) \Big)
 \label{e.4.10} 
 \end{eqnarray}
Denote $X=(S_T - K)^+$ and  consider 
 $$U(t):= \mathbb{E}_{\mathbb{Q}}(e^{-rT}X|\mathcal{F}_t )\,. $$
 
 In order to apply martingale representation theorem for  L\'evy  process (see e.g. \cite[Theorem 5.3.5]{d.applebaum}) we shall  first  show that $U_t \in \mathcal{L}^2$, which is implied by $\mathbb{E}_{\mathbb{Q}}[S^2_T] < \infty$. 
  
Write $h=g(S(t-b))$.  Then we can write
  \begin{eqnarray}
 & &\tilde{S}^2_T = \exp\Big(\displaystyle\int_{[0,T] \times \JJ } \{\ln(1+zh)^2-2zh\}\nu_{\mathbb{Q}}(dz)dt \nonumber \\
  &&\qquad\qquad   \displaystyle +\int_{[0,T] \times \JJ } \ln(1+zh)^2\Tilde{N}_{\mathbb{Q}}(dt,dz) \Big)\,.  \label{dbs12}
  \end{eqnarray}
Denoting  $\mathbb{T} = [0,T] \times \JJ $ and   taking $\tilde{h} = \frac{(1+zh)^4-1}{z} $ we   have  
 \begin{eqnarray*}
 \tilde{S}^2_T&=& \exp\Big(  \frac{1}{2} \displaystyle\int_{\mathbb{T}} \{\ln(1+z\tilde{h})-z\tilde{h}\}\nu_{\mathbb{Q}}(dz)dt+ \frac{1}{2}\displaystyle\int_{ \mathbb{T} } \ln(1+z\tilde{h})\Tilde{N}_{\mathbb{Q}}(dt,dz) \Big). \\ 
 &&\qquad \qquad \exp\Big(\displaystyle\int_{ \mathbb{T} }\Big(  \frac{z\tilde{h}}{2}-zh \Big)\nu_{\mathbb{Q}}(dz)dt \Big)\,. 
 \end{eqnarray*} 
Applying the H\"older inequality we have  
 \begin{eqnarray*}
 && \mathbb{E}_{\mathbb{Q}}\big[ \tilde{S}^2_T\big]\\
 && \quad \leq \Big[  \mathbb{E}_{\mathbb{Q}}  \exp\Big(  \displaystyle\int_{ \mathbb{T} } \{\ln(1+z\tilde{h})-z\tilde{h}\}\nu_{\mathbb{Q}}(dz)dt+ \displaystyle\int_{ \mathbb{T} } \ln(1+z\tilde{h})\Tilde{N}_{\mathbb{Q}}(dt,dz) \Big) \Big]^{1/2}
 \\
 && \qquad \quad \cdot \Big[\mathbb{E}_{\mathbb{Q}}  \exp\Big( 2 \displaystyle\int_{ \mathbb{T} }\Big(  \frac{z\tilde{h}}{2}-zh \Big)\nu_{\mathbb{Q}}(dz)dt \Big) \Big]^{1/2}\\
 & & =\Big[\mathbb{E}_{\mathbb{Q}}  \exp\Big( 2 \displaystyle\int_{ \mathbb{T} }\Big(  \frac{z\tilde{h}}{2}-zh \Big)\nu_{\mathbb{Q}}(dz)dt \Big) \Big]^{1/2}\,. 
 \end{eqnarray*}
From the definition of $\tilde h$, we have
  $z\tilde{h} = (1+zh)^4-1$. Then 
 \begin{eqnarray*}
 z\tilde{h} -2zh = (1+zh)^4-1-2zh=z^4h^4+4z^3h^3+6z^2h^2+2zh\,. 
 \end{eqnarray*}
%
%
%
Thus, 
\begin{eqnarray*}
 \mathbb{E}_{\mathbb{Q}}\big[ \tilde{S}^2_T\big] 
 \le  \exp\Big( \displaystyle\int_{ \mathbb{T} }\Big(  z^4h^4+4z^3h^3+6z^2h^2+2zh \Big)\nu_{\mathbb{Q}}(dz)dt \Big) 
 \end{eqnarray*}
which is finite by the assumptions of the theorem. 
 

From the martingale representation 
theorem  (see e.g.   \cite[theorem 5.3.5]{d.applebaum})  there exists a  square integrable predictable mapping 
   $\psi: \mathbb{T} \times\Omega \rightarrow \mathbb{R} $  
  such  that  
 \begin{eqnarray*}
  U(t) = \mathbb{E}_{\mathbb{Q}}(e^{-rT} (S_T - K)^{+}) + \displaystyle\int_0^t\displaystyle\int_{\JJ }\psi(s,z)\tilde{N}(ds,dz). 
  \end{eqnarray*} 
Define
  \begin{eqnarray*}
  \pi_S(t)&:=& \frac{\displaystyle\int_{\JJ }\psi(z,t)\tilde{N}_{\mathbb{Q}}(dz,t) }{\tilde{S}(t)g(S(t-b)) }  \\
  &=& \frac{\displaystyle\int_{\JJ }\psi(z,t) \tilde S(t) g(S(t-b))  d
  \tilde S(t)}{\tilde{S}(t)g(S(t-b)) },
\\
\pi_B(t) &:=& U(t) - \pi_S(t)\tilde{S}(t), \hspace{4mm} t \in [0,T]\,. 
  \end{eqnarray*}
  Consider the strategy $\{ (\pi_B(t),\pi_S(t)):t \in [0,T]\} $ 
  to invest  $\pi_B(t) $ units in the riskyless asset $B(t)$ and 
   $\pi_S(t)$ units in the risky asset $S(t)$ at time $t$. Then the value of the portfolio at time $t$ is given by 
 \begin{eqnarray*}
 && V(t) :=  \pi_B(t)e^{rt}+\pi_S(t)S(t) = e^{rt}U(t)
 \end{eqnarray*} 
By the definition of the strategy   we see  that
  \begin{eqnarray*}
 & dV(t)  =  \pi_B(t)de^{rt}+\pi_S(t)dS(t)= e^{rt}dU(t)+ U(t)de^{rt}\,. 
 \end{eqnarray*}
%
 Hence the strategy is self-financing. Moreover, we   have $$V(T)= e^{rT}U(T) = (S_T - K)^+.$$ 
 Hence    the   claim (referring to the European call option) 
 is attainable stand therefore the market $\{S(t),B(t):t \in [0,T]\}$ is complete.
  \end{proof}
 
 The pricing formula \eqref{dbs13} is hard to evaluate  analytically 
 and we shall use a general Monte-Carlo method to find the approximate values.
 But when the time fall in the last delay period, namely, 
 when $t\in [T-b, T]$ we have the following analytic expression
 for the price. 
  \begin{theorem}{\label{dbs18}}
 Assume  the conditions of  Theorem \ref{dbs16}.
   When $t\in   [T-b,T]$,  then price for the European Call option  is given by 
 \begin{eqnarray}
 V(t)&=&  e^{rt}\lim_{v \rightarrow \infty}\frac{1}{2\pi }\displaystyle\int_{-\infty}^{\infty}\frac{1}{ i\xi}(e^{iv\xi}-e^{iw\xi})A(t)\cdot \tilde{S}(t)\exp\left\{  \displaystyle\int_t^T\displaystyle\int_{\JJ }\Big(
  (1+zg(S(u-b)))^{( 1-i\xi)}\right.\nonumber \\
 &&- \left.
 ( 1-i\xi)\ln (1+zg(S(u-b))) -1\Big)\nu_{\mathbb{Q}}(dz)du \right\}\nonumber \\
 && -K e^{rt}\lim_{v \rightarrow \infty}\frac{1}{2\pi }\displaystyle\int_{-\infty}^{\infty}\frac{1}{ i\xi}(e^{iv\xi}-e^{iw\xi})A(t)\cdot \tilde{S}(t)\exp\left\{  \displaystyle\int_t^T\displaystyle\int_{\JJ }\Big(
  (1+zg(S(u-b)))^{ -i\xi }\right.\nonumber \\
 &&+ \left.
   i\xi \ln (1+zg(S(u-b))) -1\Big)\nu_{\mathbb{Q}}(dz)du \right\} \,,  
    \label{e.4.12}
 \end{eqnarray} 
 where  $w=\ln(K/A)-rT $   and 
\begin{eqnarray}
 A(t) = \exp\Big(\displaystyle\int_t^T\displaystyle\int_{\JJ }\{\ln{(1+zg(S(u-b)))}-zg(S(u-b))\nu_{\mathbb{Q}}(dz)du\Big)\,. \label{e.4.13} 
 \end{eqnarray}
  \end{theorem}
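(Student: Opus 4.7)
The role of the restriction $t\in[T-b,T]$ is that for every $u\in[t,T]$ one has $u-b\in[t-b,T-b]\subseteq(-\infty,t]$, so $g(S(u-b))$ is $\cF_t$-measurable. Hence, conditional on $\cF_t$, the SDE \eqref{dbs20} is driven on $(t,T]\times\JJ$ by an increment of $\tilde N_{\mathbb{Q}}$ independent of $\cF_t$ with an $\cF_t$-measurable predictable integrand. The strategy is to compute the conditional law of $S_T$ given $\cF_t$ via a Fourier representation of the indicator $\mathbbm{1}_{\{S_T>K\}}$, in the spirit of the classical Black--Scholes argument but with characteristic functions produced by the L\'evy--Khintchine formula in place of Gaussian ones.

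\textbf{Step 1.} Applying \eqref{e.4.10} on $[t,T]$ together with the definition \eqref{e.4.13} of $A(t)$ yields
\[
S_T = e^{rT}\,\tilde S(t)\,A(t)\,\exp\bigl(M(t,T)\bigr),\qquad M(t,T):=\int_t^T\!\!\int_{\JJ}\ln\bigl(1+zg(S(u-b))\bigr)\,\tilde N_{\mathbb{Q}}(du,dz).
\]
Setting $w:=\ln\!\bigl(K/(\tilde S(t)A(t))\bigr)-rT$, the event $\{S_T>K\}$ coincides with $\{M(t,T)>w\}$. Splitting $(S_T-K)^+=S_T\mathbbm{1}_{\{S_T>K\}}-K\mathbbm{1}_{\{S_T>K\}}$ gives
\[
V(t)=e^{-r(T-t)}\Bigl(e^{rT}\tilde S(t)A(t)\,\mathbb{E}_{\mathbb{Q}}\!\bigl[e^{M(t,T)}\mathbbm{1}_{\{M(t,T)>w\}}\mid\cF_t\bigr]\;-\;K\,\mathbb{E}_{\mathbb{Q}}\!\bigl[\mathbbm{1}_{\{M(t,T)>w\}}\mid\cF_t\bigr]\Bigr).
\]

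\textbf{Step 2.} For $\alpha\in\{0,1\}$ and any $v>w$, I invoke the Fourier identity
\[
\mathbbm{1}_{[w,v]}(y)=\frac{1}{2\pi}\int_{-\infty}^{\infty}\frac{e^{iv\xi}-e^{iw\xi}}{i\xi}\,e^{-iy\xi}\,d\xi
\]
with $y=M(t,T)$, multiply by $e^{\alpha y}$, take conditional expectation, and send $v\to\infty$. By the conditional independence of the Poisson increment on $(t,T]\times\JJ$ from $\cF_t$ together with the L\'evy--Khintchine formula, the conditional Laplace transform is
\[
\mathbb{E}_{\mathbb{Q}}[e^{\beta M(t,T)}\mid\cF_t]=\exp\!\left(\int_t^T\!\!\int_{\JJ}\!\Bigl[(1+zg(S(u-b)))^{\beta}-1-\beta\ln(1+zg(S(u-b)))\Bigr]\nu_{\mathbb{Q}}(dz)\,du\right).
\]
Substituting $\beta=1-i\xi$ in the first expectation and $\beta=-i\xi$ in the second reproduces precisely the two exponents appearing in \eqref{e.4.12}, with the prefactor $e^{-r(T-t)}\cdot e^{rT}=e^{rt}$ attached to the first contribution and $-Ke^{-r(T-t)}$ to the second.

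\textbf{Main obstacle.} The delicate point is the Fourier inversion of the indicator, since $\xi\mapsto(e^{iv\xi}-e^{iw\xi})/(i\xi)$ is not integrable on $\mathbb{R}$; this is exactly why \eqref{e.4.12} is phrased as an outer limit $v\to\infty$ rather than a single absolutely convergent integral. To justify exchanging $\mathbb{E}_{\mathbb{Q}}[\cdot\mid\cF_t]$ with the $\xi$-integral and the subsequent $v\to\infty$ limit, I would either regularize (convolve the indicator with $e^{-\epsilon\xi^2}$, apply Fubini, and send $\epsilon\to 0^+$) or integrate by parts in $\xi$, exploiting the decay of the conditional characteristic function in $|\xi|$ afforded by assumption (A4) together with the moment bound $\mathbb{E}_{\mathbb{Q}}[S_T^2\mid\cF_t]<\infty$ already obtained in the proof of Theorem \ref{dbs16}. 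Once this swap is rigorously justified, formula \eqref{e.4.12} drops out by direct substitution of the Laplace-transform identity.
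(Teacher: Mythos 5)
Your proposal follows essentially the same route as the paper: split $(S_T-K)^+$ into the two conditional expectations, use the explicit exponential formula \eqref{e.4.10} to reduce everything to the jump integral $Y=\int_t^T\int_{\JJ}\ln(1+zg(S(u-b)))\tilde N_{\mathbb{Q}}(dz,du)$ with $\cF_t$-measurable integrand, represent the indicator by the Fourier identity, and evaluate $\mathbb{E}_{\mathbb{Q}}[e^{\beta Y}\mid\cF_t]$ with $\beta=1-i\xi$ and $\beta=-i\xi$ via the L\'evy--Khintchine exponent. Your version is if anything slightly more careful than the paper's (you write $w=\ln\bigl(K/(\tilde S(t)A(t))\bigr)-rT$, which is the correct threshold, and you flag the Fubini/limit-interchange issue that the paper passes over silently), so no changes are needed.
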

\begin{proof} 
  By \eqref{dbs13}
 for any time $t \in [0,T]$
we have 
 \begin{eqnarray}
 V(t)&=&e^{-r(T-t)}\mathbb{E}_{\mathbb{Q}}\Big((S(T) - K)^+ \mid \mathcal{F}_t  \Big)\nonumber\\
 & =& e^{rt}\mathbb{E}_{\mathbb{Q}}\Big((\tilde{S}(T) - Ke^{-rT})^+ \mid\mathcal{F}_t \Big)\nonumber\\ 
 & = &  e^{rt}\mathbb{E}_{\mathbb{Q}}\Big(\tilde{S}(T)\111_{\{ \tilde{S}(T) \geq Ke^{-rT} \}}\mid \mathcal{F}_t \Big) - Ke^{rt}\mathbb{Q}(\tilde{S}(T) \geq Ke^{-rT})\nonumber\\
& =:&V_1(t)-V_2(t)\,.  \label{dbs17}
 \end{eqnarray} 
 
 First,  let us compute $V_1(t)$  and  $V_2(t)$ can be computed similarly. 
The solution $\tilde S(t)$ is given by \eqref{e.4.10}, which we rewrite here:  
 \begin{eqnarray}
  \tilde{S}(T) &=&\tilde{S}(t)\exp\Big\{\displaystyle\int_t^T\displaystyle\int_{\JJ }\{\ln{(1+zg(S(u-b)))}-zg(S(u-b))\}\nu_{\mathbb{Q}}(dz)du  \nonumber \\ && \qquad \quad \displaystyle +\int_t^T\displaystyle\int_{\JJ }\ln{(1+zg(S(u-b)))}\tilde{N}_{\mathbb{Q}}(dz,du) \Big\}\,. 
 \end{eqnarray}
 When $u\in [t, T]$ and $t\in [T-b, T]$, we see that   $S(u-b)$  
 is $\cF_t$-measurable.  Hence while computing the conditional expectation of $h(\tilde S(T))$ with respect to $\cF_t$, we can consider  the integrands 
$  \ln (1+zg(S(u-b)))$  and $\ln (1+zg(S(u-b))) -zg(S(u-b))   $  as ``deterministic" functions.  Thus, the analytic expression for the conditional 
expectation is possible. But it is still complicated.
To find the exact  expression and to simplify the presentation, let us 
use the notation \eqref{e.4.13} and introduce 
 \[
 Y = \displaystyle\int_t^T\displaystyle\int_{\JJ }\ln{(1+zg(S(u-b)))}\Tilde{N}_{\mathbb{Q}}(dz,du)\,.
 \]
 With these notation we have 
 \begin{eqnarray*}
 \Tilde{S}(T)=\tilde{S}(t) A\exp{Y}\,. 
 \end{eqnarray*} 
 To calculate $\mathbb{E}_{\mathbb{Q}}\Big(e^Y\111_{\{v\geq Y \geq w \}}\Big)$ we first express $\111_{[w,v]}$   as 
 the (inverse) Fourier transform of  exponential function
because  $\EE (e^{i \xi Y})$ is computable. Since the  Fourier transform of $\111_{\{w,v \}}$ is 
 $$ \displaystyle\int_{-\infty}^{\infty}e^{ix\xi}\111_{[w,v]}dx = \frac{1}{i\xi}(e^{iv\xi}-e^{iw\xi})$$
 we can write
 \begin{eqnarray*}
 \111_{[w,v]}(x) = \frac{1}{2\pi }\displaystyle\int_{-\infty}^{\infty}\frac{1}{ i\xi}(e^{i[v-x]\xi}-e^{i[w-x]\xi})d\xi\,. 
 \end{eqnarray*}
Therefore we have
 \begin{eqnarray*}
        \mathbb{E}_{\mathbb{Q}}(e^Y\111_{\{v\geq Y \geq w \}} \mid \mathcal{F}_t )&=& \frac{1}{2\pi }\displaystyle\int_{-\infty}^{\infty}\mathbb{E}_{\mathbb{Q}}\Big(\frac{1}{ i\xi}(e^{i[v-Y]\xi+Y}-e^{i[w-Y]\xi+Y})\mid  \mathcal{F}_t  \Big)d\xi\\
       &=& \frac{1}{2\pi }\displaystyle\int_{-\infty}^{\infty}\frac{1}{ i\xi}(e^{iv\xi}-e^{iw\xi}) \mathbb{E}_{\mathbb{Q}}(e^{Y(1-i\xi)}\mid  \mathcal{F}_t  )d\xi \,.
 \end{eqnarray*}
Denote $\mathbb{T}_t = [t,T] \times \JJ $.  Then we have 
 \begin{eqnarray*}
 \mathbb{E}_{\mathbb{Q}}(e^{Y-iY\xi})
 &=& \mathbb{E}_{\mathbb{Q}}\Big(\exp{ \displaystyle\int_{\mathbb{T}_t}( 1-i\xi)\ln{(1+zg(S(u-b)))}\tilde{N}(dz,du)\mid \mathcal{F}_t  }\Big)
 \\ & = &\mathbb{E}_{\mathbb{Q}}\Big(\exp{ \displaystyle\int_{\mathbb{T}_t}( 1-i\xi)\ln{(1+zg(S(u-b)))}\tilde{N}(dz,du)}\Big)     \\
 &=&\exp\Big(\displaystyle\int_{\mathbb{T}_t}\{e^{(1-i\xi)\ln(1+zg(S(u-b)) )}\\
 &&\qquad\quad -(1-i\xi)\ln(1+zg(S(u-b)))   -1   \}\nu_{\mathbb{Q}}(dz)du\Big)\\
 & =& \exp\Big(\displaystyle\int_{\mathbb{T}_t} \{(1+zg(S(u-b)))^{(1-i\xi)}\\
 &&\qquad\quad  -\ln(1+zg(S(u-b)))^{(1-i\xi) } 
 -1 \}\nu_{\mathbb{Q}}(dz)du\Big)\,. 
 \end{eqnarray*}
Hence
 \begin{eqnarray*}
 && \mathbb{E}_{\mathbb{Q}}(e^{Y}\111_{\{v\geq Y \geq w \}}\mid \mathcal{F}_t )=\frac{1}{2\pi }\displaystyle\int_{-\infty}^{\infty}\frac{1}{ i\xi}(e^{iv\xi}-e^{iw\xi})    \exp\Big(\displaystyle\int_{\mathbb{T}_t} \{(1+zg(S(u-b)))^{(1-i\xi)}\\ 
&& \qquad \quad \qquad -  \ln(1+zg(S(u-b)))^{(1-i\xi) }-1 \}\nu_{\mathbb{Q}}(dz)du\Big)d\xi\,. 
 \end{eqnarray*}
Taking  $w=\ln(K/A)-rT $, $v \rightarrow \infty$ in the above
formula we can evaluate \eqref{dbs17} as follows. 
 \begin{eqnarray*}
 V_1(t)&=& e^{rt}\mathbb{E}_{\mathbb{Q}}\Big(\tilde{S}(T)\111_{\{ \tilde{S}(T) \geq Ke^{-rT} \}} \mid \mathcal{F}_t \Big) 
 \\ &&=  e^{rt}\lim_{v \rightarrow \infty}\frac{1}{2\pi }\displaystyle\int_{-\infty}^{\infty}\frac{1}{ i\xi}(e^{iv\xi}-e^{iw\xi})A\cdot\tilde{S}(t) \cdot\exp\Big(\displaystyle\int_{\mathbb{T}_t} \{(1+zg(S(u-b)))^{(1-i\xi)}\\  &&
 -\ln(1+zg(S(u-b)))^{(1-i\xi) }-1 \}\nu_{\mathbb{Q}}(dz)du\Big)d\xi   
 \\&&
 =  e^{rt}\lim_{v \rightarrow \infty}\frac{1}{2\pi }\displaystyle\int_{-\infty}^{\infty}\frac{1}{ i\xi}(e^{iv\xi}-e^{iw\xi})A\cdot \tilde{S}(t).\exp\Big(\displaystyle\int_{\mathbb{T}_t} \{(1+zg(S(u-b)))^{(1-i\xi)}\\&& -\ln(1+zg(S(u-b)))^{(1-i\xi) }-1 \}\nu_{\mathbb{Q}}(dz)du\Big)d\xi  \,. 
 \end{eqnarray*}
Exactly in the same way (and now without the factor $e^Y$), we have
\begin{eqnarray*}
 V_2(t)
& &=  K e^{rt}\lim_{v \rightarrow \infty}\frac{1}{2\pi }\displaystyle\int_{-\infty}^{\infty}\frac{1}{ i\xi}(e^{iv\xi}-e^{iw\xi})A\cdot \tilde{S}(t).\exp\Big(\displaystyle\int_{\mathbb{T}_t} \{(1+zg(S(u-b)))^{   -i\xi) }\\ 
&& -\ln(1+zg(S(u-b)))^{  -i\xi } -1 \}\nu_{\mathbb{Q}}(dz)du\Big)d\xi  \,.  
 \end{eqnarray*}
 This  gives \eqref{e.4.12}. 
 \end{proof} 

 \section{Numerical attempt }\label{s.5}
 In this section we make an  attempt to carry out some numerical computations
 of our formula (3.39)  against the American
  call options Microsoft stock traded in  Questrade
 platform. 
To apply our model in the financial market, we need to 
estimate all the parameters including the delay factor $b$ 
from the real data.  To the best  of our knowledge  the theory on the parameter
estimation is still unavailable even in the case of 
the classical model of [3]. Motivated by the work of [19],
we try our best guess of the parameters in  the model  
(3.31)-(3.32).


 The  real market option prices   we   consider  is for 
 the  American call option  on  Microsoft stock. The data we use is from Questrade trading/investment platform on October 5, 2020 at 12:25 PM (EDT).  We take $T $ to be one, three and six months active trading period respectively. The real prices of  the options of different strike prices  are listed in the last column  of
 the three tables below. 
 
 The readers may wonder that since 
the option pricing  formulas for both our model and the classical  
Black-Scholes model are for the European call option, why we use the market price for the  American option. The reason is that we can only find the market price for the American option. On the other hand, as stated in \cite[p.251]{kwok} 
``There is no advantage to exercise an American call prematurely
when the asset received upon early exercise does not pay dividends. The early exercise
right is rendered worthless when the underlying asset does not pay dividends, so
in this case the American call has the same value as that of its European counterpart".  See also \cite[p.61, Theorem 6.1]{karatzas}. This justifies our use of the market price for the 
American option. 
 
%
%
%
%
 Using Monte-Carlo simulation we calculate the prices   of European option 
  given by \eqref{dbs13}   and the analogous Black-Scholes formula obtained from the model: $dS(t)=S(t)[\alpha  dt+\sigma  dW(t)]$.  We simulate  {2000} paths
 of the solutions to both equations  using the logarithmic Euler-Maruyama scheme
 [for Black-Scholes model   the logarithmic Euler-Maruyama scheme is 
 the same by replacing the jump process 
  by Brownian motion].  In the simulations we  take the time step
  $\Delta$ to be the trading unit minute.
So when $T=1$ month,  there are 
 \[
 n=\hbox{trading hours}  \times 60 \times  \hbox{trading days}=6.5\times 60  \times 22=8580
 \]
 minutes. So $\De=\frac{1}{8580}$.  {We do the same for $T=3$ and $ T=6$.} 
  
  In our calculation  for the delayed jump model we use  the double exponential jump process as our $Y_i$'s with parameters $p =.60, q=1-p=.40,\eta=12.8,\theta =8.40$ with the intensity  $ \lambda =.03$.  The interest rate $r=.01$ is  the risk free 
   rate.  The delay factor was taken to be one day  which is    $b=\frac{6.5\times 60}{8580}$ because there are trading $6.5$ 
    hours in a trading day.  The function $f(x)$ was taken to be a fixed constant $f(x) = .1$, $g(x) = .15*\sin(x/209.11)$ and $\phi(x) = \exp(\alpha x/n) $ with $\alpha=.11$.  We choose $\alpha=.11$ since the   initial  price we have taken is $209.11$ and the predicted average price target of Microsoft stock for next one year (around 12 months from October 5, 2020) is $230$ which is $11\%$. 
    
    For the simulation of the Black-Scholes model, based on stock prices for the year 2019 we take volatility of the Microsoft stock as $\sigma=15\%$  to calculate Black-Scholes price.
  We have taken $r=1 \%$ since in the last one year the range of 10 year treasury rate has been between .52\% to 1.92\%.   
 
 The computations are summarized in the following tables. Notice
 an interesting phenomenon   that the price we obtain by  using our formula  is comparable 
to the Black-Scholes price for shorter maturities and is more closer 
to the  real market price for longer maturity.  This may be because  of our choice  
of the parameters by guessing.  

\medskip
 \begin{tabular}{ |p{2.5cm}||p{2.5cm}|p{2.5cm}|p{2.5cm}|  }
  \hline
  \multicolumn{4}{|c|}{{Call Option price comparison for $T=1$ month for Microsoft stock}} \\
  \hline
  Strike Price& Black-Scholes option price  (European) with 1 month expiration (no delay)& Option price of jump model (European) with 1 month expiration& Market Price of American option with expiration 1 month\\
  \hline
  195    &{16.27} &   {16.08} & 18.3\\
  200& {11.41}   & {11.05} & 15.15\\
  205 & {7.65}&  {6.91} & 12\\
  210 & {4.54} &  {3.62}& 9.43\\
  215  & {2.05}& {1.48}& 7\\
  220  & {.83}  & {.61} & 5.15\\
  
  \hline
 \end{tabular}

 \begin{tabular}{ |p{2.5cm}||p{2.5cm}|p{2.5cm}|p{2.5cm}|  }
   \hline
   \multicolumn{4}{|c|}{{Call Option price comparison for $T=3$ month for Microsoft stock}} \\
   \hline
   Strike Price& Black-Scholes option price  (European) with 3 month expiration (no delay) & Option price of jump model (European) with 3 month expiration&Market Price of American option with expiration 3 months \\
   \hline
   195    &{21.37} &   {21.27} & 24.40\\
   200& {16.72}  & {16.99}& 21.35\\
   205 & {13.08}&  {14.50}& 18.55\\
   210 & {9.65} &  {11.43}& 15.95\\
   215  & {6.35}& {8.58}& 13.65\\
   220  & {4.31}  & {7.51}& 11.55\\
   
   \hline
  \end{tabular}
  
  \begin{tabular}{ |p{2.5cm}||p{2.5cm}|p{2.5cm}|p{2.5cm}|  }
    \hline
    \multicolumn{4}{|c|}{{Call Option price comparison for $T=6$ month for Microsoft stock}} \\
    \hline
    Strike Price& Black-Scholes option price  (European) with 6 month expiration (no delay)& Option price of jump model (European) with 6 month expiration& Market Price of American option with expiration 6 months\\
    \hline
    195    &{28.41} &  {29.53} & 29.00\\
    200& {23.85}  & {26.11}& 26.15\\
    205 & {19.49}&  {24.44}& 23.50\\
    210 & {16.24}&  {21.15} & 21.05\\
    215  & {12.83}& {18.39}& 18.80\\
    220  & {10.58}  & {17.97}& 16.70\\
    \hline
   \end{tabular}

 \medskip 

\section{Conclusion} 
In this paper we introduce and study a  stochastic delay equation
 with jump and  derive  a formula for the fair   price of the European call option. We assume  that the jump is dictated by a  compensated L\'evy
  process, which includes the process like asymmetric double exponential, 
  hyper-exponential jump process. In the numerical execution we  consider the asymmetric double exponential process. Furthermore,  we   
   propose a logarithmic Euler-Maruyama scheme
   (a variant of Euler-Maruyama scheme) which preserve the positivity of the approximate solutions and show that  the convergence rate of this scheme is $0.5$
   in any $L^p$ norm,
 the   optimal rate for  the classical Euler-Maruyama scheme for the  
stochastic differential equations driven by standard Brownian motion
(see e.g. \cite{cambanis}). {From the above tables   we see    that the parameters guessed   here may  not be 
the best possible values but our formula still gives a good  fit  to the real market prices  compared to the Black-Scholes formula. We note further that potential research problem of parameter estimation is still open before we can come up with the best possible simulated results.
}
%
%

 
\textbf{This research was funded by  
 an NSERC discovery fund and a startup fund of University of Alberta.}
 
\bigskip
\noindent
{\bf   References}

\end{document}